\title{The positivication of coalgebraic logics\footnote{This work was supported by the ERC grant ERC-2015-STG, ProFoundNet}}
\date{}
\author{Fredrik Dahlqvist\\University College London\\\texttt{f.dahlqvist@ucl.ac.uk}
\and
Alexander Kurz \\ University of Leicester \\  \texttt{ak155@leicester.ac.uk}}
\newcommand{\Set}{\mathbf{Set}}
\newcommand{\Pos}{\mathbf{Pos}}
\newcommand{\BA}{\mathbf{BA}}
\newcommand{\CABA}{\mathbf{CABA}}
\newcommand{\DL}{\mathbf{DL}}
\newcommand{\Pow}{\mathsf{P}} 
\newcommand{\cPow}{\mathcal{P}} 
\newcommand{\Nb}{\mathsf{N}} 
\newcommand{\MNb}{\mathsf{M}} 
\newcommand{\Mset}{\mathsf{B}} 
\newcommand{\Free}{\mathsf{F}}
\newcommand{\Forg}{\mathsf{U}} 
\newcommand{\Uf}{\mathsf{S}}
\newcommand{\Pf}{\mathsf{S'}}
\newcommand{\Up}{\Pow'}
\newcommand{\ForgBADL}{\mathsf{W}} 
\newcommand{\FreeBADL}{\mathsf{G}} 
\newcommand{\Disc}{\mathsf{D}} 
\newcommand{\Con}{\mathsf{C}} 
\newcommand{\ForgPos}{\mathsf{V}} 
\newcommand{\UP}{\mathsf{Up}}
\newcommand{\DO}{\mathsf{Down}}
\newcommand{\BASet}{\mathsf{U_{BA}}}
\newcommand{\ForgBA}{\mathsf{U_{BA}}}
\newcommand{\CABASet}{\mathsf{U_{CABA}}}
\newcommand{\ForgCABA}{\mathsf{U_{CABA}}}
\newcommand{\FreeBA}{\mathsf{F}_\mathsf{BA}}
\newcommand{\FreeDL}{\mathsf{F}_\mathsf{DL}}
\newcommand{\FreeCABA}{\mathsf{F_{CABA}}}
\newcommand{\poly}[1][\Sigma]{\mathsf{H}_{#1}}
\newcommand{\ana}[1][\Sigma]{\mathsf{G}_{#1}}
\newcommand{\id}{\mathrm{id}}
\newcommand{\Var}{\mathcal{A}}
\newcommand{\two}{\mathbbm{2}}
\newcommand{\op}{^{\mathrm{op}}}
\newcommand{\Ran}{\mathrm{Ran}}
\newcommand{\Lan}{\mathrm{Lan}}
\newcommand{\Alg}{\mathrm{Alg}}
\newcommand{\cop}{\bullet} 
\newcommand{\dia}{\Diamond}
\newcommand{\diak}[1][k]{\dia_k}
\newcommand{\lsem}{\llbracket}
\newcommand{\rsem}{\rrbracket}
\newcommand{\inv}{^{-1}}
\newcommand{\N}{\mathbb{N}} 
\newcommand{\up}{\hspace{1pt}\uparrow\hspace{-2pt}}
\newcommand{\down}{\hspace{1pt}\downarrow\hspace{-2pt}}
\newcommand{\ari}{\mathrm{ar}}
\theoremstyle{plain}
\newtheorem{theorem}{Theorem}
\newtheorem{proposition}[theorem]{Proposition}
\newtheorem{corollary}[theorem]{Corollary}
\newtheorem{lemma}[theorem]{Lemma}
\newtheorem{remark}[theorem]{Remark}
\theoremstyle{definition}
\newtheorem{definition}{Definition}
\newtheorem{rem}[definition]{Remark}
\begin{document}

\maketitle

\begin{abstract}
We present positive coalgebraic logic in full generality, and show how to obtain a positive coalgebraic logic from a boolean one. On the model side this involves canonically computing a endofunctor $T': \Pos\to\Pos$ from an endofunctor $T: \Set\to\Set$, in a procedure previously defined by the second author \textit{et alii} called \emph{posetification}. On the syntax side, it involves canonically computing a syntax-building functor $L': \DL\to\DL$ from a syntax-building functor $L: \BA\to\BA$, in a dual procedure which we call \emph{positivication}. These operations are interesting in their own right and we explicitly compute posetifications and positivications in the case of several modal logics. We show how the semantics of a boolean coalgebraic logic can be canonically lifted to define a semantics for its positive fragment, and that weak completeness transfers from the boolean case to the positive case.
\end{abstract}

\section{Introduction}\label{sec:Intro}

Partially ordered structures are ubiquitous in theoretical computer science. From knowledge representation to abstract interpretation in static analysis, from resource modelling to protocol or access rights formalization in formal security, the list of applications is enormous. Being able to formally reason about transition systems over posets therefore seems important, but has not been systematically developed. The natural formalism to reason about transition systems is undoubtedly the class of \emph{modal} logics. However, most are tailored to transition structures over \emph{sets}. This is a direct consequence of the fact that most modal logics are \emph{boolean}. \emph{Positive modal logic} is the exception, and is most naturally interpreted in partially ordered Kripke structures (see for example \cite{1997:CelaniJansana,2004:gehrkeSahlqvist}).

Arguably, the most natural and powerful framework to study boolean modal logics in a uniform and systematic way, is the theory of \emph{Boolean Coalgebraic Logics} (henceforth BCL, see e.g. \cite{2009:Cirstea:MLC}). In its `abstract' (\cite{2011:DirkOverview}) presentation, it is parametrised by an endofunctor $L:\BA\to\BA$ which builds modal algebras of modal terms over a boolean structure, an endofunctor $T:\Set\to\Set$ which builds the transition structures over which the modal terms are to be interpreted, and a natural transformation $\delta: L\Pow\to \Pow T\op$ (where $\Pow:\Set\op\to\BA$ is the powerset functor) which implements the interpretation by associating sets of acceptable successors states to each modal term over a predicate (see \cite{2004:KKP,2005:KKP,2010:JacobsExemplaric,2012:kurz-rosicky}). This data, and the dual adjunction between $\Set$ and $\BA$, is traditionally summarized in the following diagram
\begin{equation}\label{diag:boolean}
\xymatrix
{
\BA\ar@/^1pc/[rr]^{\Uf} \ar@(dl,ul)^{L}& \perp&\Set\op\ar@/^1pc/[ll]^{\Pow}\ar@(dr,ur)_{T\op} & & & \delta: L\Pow\Rightarrow \Pow T\op
}
\end{equation}
where $\Uf$ is the functor sending a boolean algebra to the set of its ultrafilters.

To develop an equally powerful framework for reasoning about transition structures over posets, it seems natural to study \emph{Positive Coalgebraic Logics} (henceforth PCL). In fact, work in this direction has already started, see for example \cite{2012:PCLExpressivity,2013:PositiveFragments}. We pursue this work further and present PCL in full generality, i.e. at the same level of generality as its boolean counterpart. Moreover, given the close kinship between the two theories, we will show that the wheel needn't be re-invented every time, and that many BCLs have a canonical positive fragment which inherits useful properties from its boolean parent. The data defining a PCL will be
\begin{equation}\label{diag:positive}
\xymatrix
{
\DL\ar@/^1pc/[rr]^{\Pf} \ar@(dl,ul)^{L'}& \perp&\Pos\op\ar@/^1pc/[ll]^{\Up}\ar@(dr,ur)_{(T')\op} & & & \delta': L'\Up\Rightarrow \Up (T')\op
}
\end{equation}
where $\Pf$ is the functor sending a distributive lattice to the poset of its prime filters, and $\Up$ is the functor sending a poset to the distributive lattice of its upsets. The following observation will be of fundamental importance in what follows: the adjunction $\Pf\dashv\Up: \Pos\op\to\DL$ which is the backbone of diagram (\ref{diag:positive}) is in fact \emph{$\Pos$-enriched}; that is to say $\DL$ and $\Pos$ are $\Pos$-enriched categories and $\Pf,\Up$ are $\Pos$-enriched functors (\cite{1982:kelly}). Clearly, it would be a shame not to use this extra structure which comes for free. But more seriously, this enriched structure is not simply a mathematical quirk, it suggests that `doing logic' positively is quite different from `doing logic' in a boolean setting, in particular it is more than simply dropping negations. In fact \emph{inequations} become the standard relation between terms on the syntax side, just as it is between elements on the model side. This is borne out by the existing axiomatization of positive modal logic originally proposed in \cite{1995:DunnPML} which is entirely given by inequations. For these reasons, and following \cite{2012:PCLExpressivity,2013:PositiveFragments}, this paper will present positive coalgebraic logic as a $\Pos$-enriched coalgebraic logic. In a slogan: ``\emph{in positive coalgebraic logic we remove negations but we add order}''.

Working in a $\Pos$-enriched setting means that the syntax-building functor $L':\DL\to\DL$ and the coalgebra-building functor $T':\Pos\to\Pos$ will also need to be $\Pos$-enriched. The first main contributions of this paper is to show how well-known ordinary functors from BCL can be turned into $\Pos$-enriched functors performing analogous roles in PCL. 
On the semantics side this means turning an ordinary functor $T:\Set\to\Set$ into a $\Pos$-enriched functor $T':\Pos\to\Pos$ by a process called \emph{posetification} first developed in \cite{2013:PositiveFragments}, for which we develop a practical understanding in Section \ref{sec:Posetification} by computing the posetification of several well-known functors from modal logic: the neighbourhood, monotone neighbourhood, powerset and multiset functors. On the syntax side this means turning an ordinary functor $L:\BA\to\BA$ into a $\Pos$-enriched functor $L':\DL\to\DL$, a process which we call \emph{positivication} and which is detailed in Section \ref{sec:Positivication}. We show how positivication can be applied to the functor defining normal modal logic, but also to functors which define non-monotone modal logics. In this case, the positivication procedure may not yield a logic at all, at least not in the usual meaning of the word. The second main contribution of this paper is to show how a semantic natural transformation $\delta: L\Pow\to\Pow T$ can also be lifted to define a $\Pos$-enriched semantic natural transformation $\delta': L'\Up\to\Up T'$, where $T'$ is the posetification of $T$, and $L'$ is the positivication of $L$. This is done in Section \ref{sec:PCL}, where we also prove that by construction of $\delta'$, if $\delta$ defines a weakly complete logic, then so does $\delta'$.

\textbf{Related work.} As mentioned above several aspects of positive coalgebraic logics  have been studied in \cite{2012:PCLExpressivity,2013:PositiveFragments}, the concept of posetification and the idea of working in an enriched setting in particular. The key contribution of this work is to dissociate the syntax from the semantics. This reflects the practise of modal logics, where the syntax-building functor $L$ is usually not defined directly from the semantics-building functor, but rather from a grammar which is convenient to express certain properties. Graded modal logic for example relies on a syntax which is not obviously related to its semantics. This justifies going beyond the techniques of \cite{2013:PositiveFragments}. As a consequence, one must also be able to define a semantic natural transformation $L'\Up\to \Up T'$, which we do be adapting the boolean semantics. We are also indebted to work on monotone modal logic for the monotone neighbourhood functor, see eg  \cite{2004:hansenMonotone, 2010:santocanaleUniform},  and on non-monotone modal logic for the (unrestricted) neighbourhood functor as these two cases highlight many of the peculiar features of our approach.

\section{A maths toolkit}\label{sec:maths}

\newcommand{\DLBA}{\mathsf{K}}
\newcommand{\FreeDLBA}{\mathsf{G}}
\newcommand{\BADL}{\ForgBADL}
\newcommand{\Id}{\mathsf{Id}}

\subsection{Ordinary vs $\Pos$-enriched category theory}
The central tool of this paper is to work in categories enriched over $\Pos$. For a general reference to enriched categories we refer to \cite{1982:kelly}. But the special case of $\Pos$-categories is much simpler than the general case and we believe that most of this paper can be read without special knowledge in enriched category theory. The purpose of this section is to review what will be required.

A $\Pos$-category is a category in which homsets are posets and composition is monotone in each argument. A $\Pos$-functor is a functor that is \emph{locally monotone}, that is, it preserves the order on homsets. $\Pos$-natural transformations are just natural transformations.

Monotonicity permeates all aspects of $\Pos$-enriched categories. For example, $\Pos$-enriched algebra, or \emph{ordered algebra}, is characterised by all operations of an ordered algebra being monotone. This is important for our application of ordered algebra to positive coalgebraic logic, that is, to coalgebraic logic with monotone operations.

The basic features of ordered universal algebra can be developed in much the same way as ordinary universal algebra \cite{BloomWright}. Following the $\Pos$-enriched approach of  \cite{2016:ordered-algebras}, the most important change to make is to replace coequalisers by so-called coinserters. 

One of the most important features of the $\Pos$-enriched setting is that with the so-called weighted limits and colimits additional universal constructions become available. For example,
$$
\xymatrix@C=8ex
{
A\ar@<6pt>[r]^{g}\ar@<-6pt>[r]_{f}^{\uparrow} 
& B \ar[r]^{c} 
& C
}$$
is a \emph{coinserter} if $c\circ f\le c\circ g$ and for all $h$ with $h\circ f\le h\circ g$ there is a unique $k$ such that $k\circ c=h$. This is almost like a coequaliser, but $C$ is a quotient of $B$ w.r.t. inequations. For example, in $\Pos$, the coinserter $C$ is obtained by adding to $B$ the inequations $\{ fa\le ga \mid a\in A\}$ and then quotienting by anti-symmetry. We will encounter two special kinds of coinserters (we sometimes drop now the $\uparrow$ notation in the interest of typesetting):
$$
\hfill\xymatrix@C=8ex
{
A\ar@<2pt>[r]^{\pi_1}\ar@<-2pt>[r]_{\pi_0} \ar@(dl,ul)^{s}
& B \ar@/_1.5pc/[l]|-i
}\hfill$$
A pair of arrows, and also its coinserter, is called \emph{reflexive} if there is $i:B\to A$ such that $\pi_0\circ i = \pi_1\circ i = \id$ and it is called \emph{symmetric} if there is $s:A\to A$ such that $\pi_0\circ s=\pi_1$ and $\pi_1\circ s = \pi_0$. Note that an arrow is a coinserter of a symmetric pair, or a symmetric coinserter for short, iff it is a coequaliser of the same pair.

The dual notion is that of an \emph{inserter}
$$
\xymatrix@C=8ex
{
E \ar[r]^e
& B\ar@<6pt>[r]^{g}\ar@<-6pt>[r]_{f}^{\uparrow} 
& A . 
}$$
In $\Pos$, as well as in categories of ordered algebras, the inserter $E$ is $\{b\in B\mid fa\le gb\}$. 

Another important limit not available in ordinary categories is the \emph{power} or \emph{cotensor} with a poset. For example, in $\Pos$ we have $X^\two=\{(x,x')\mid x\le x'\}$ where $\two$ is $\{0<1\}$. We will also encounter the dual notion, the \emph{tensor} or \emph{copower} $X\bullet A$. Here we say that a category $\cal A$ has tensors if for all $A\in\cal A$ and all posets $X$, there is an object $X\bullet A$ such that 
$${\cal A}(X\bullet A,A')\cong[X,{\cal A}(A,A')]$$
 where $[X,Y]$ denotes exponentiation (aka internal hom) in $\Pos$. The tensor $\two\bullet A$ can be understood as an ordered coproduct of $A$ with itself in which ``each $a$ on the left is smaller than the $a$ on the right''.

In order to treat ordinary categories and $\Pos$-enriched categories in the same framework, we consider an ordinary category as a $\Pos$-category with discrete homsets. For each $\Pos$-category $\cal A$ there is a corresponding ordinary category ${\cal A}_o$. For example, we have $\Set=\Set_o$ and $\BA=\BA_o$, but $\Pos$ and $\Pos_o$ are different. In particular, there is no (enriched) forgetful functor $\Pos\to\Set$, only an (ordinary) forgetful functor $\Pos_o\to\Set_o$. Note that using ``$_o$'' allows us to drop the qualifications enriched and ordinary without creating ambiguity. For example, the inclusion $\Disc:\Set\to\Pos$ has a  left adjoint $\Con:\Pos\to\Set$ mapping a poset to its connected components
 and the inclusion $\Disc_o:\Set_o\to\Pos_o$ has as a right adjoint the forgetful functor $\ForgPos:\Pos_o\to\Set_o$, but $\Disc:\Set\to\Pos$ does not have a right adjoint.%
\footnote{$L:\cal A \to\cal B$ is a $\Pos$-enriched left-adjoint of $R:\cal B\to\cal A$ if there is a natural isomorphism \emph{of posets} ${\cal B}(LA,B)\cong{\cal A}(A,RB)$. We have $\Pos_o(DA,B)\cong\Set_o(A,\ForgPos B)$ but not $\Pos(DA,B)\cong\Set(A,\ForgPos B)$. }
\begin{equation}\label{eq:CDV}
\hfill 
\Con\dashv \Disc : \Set\to\Pos \quad\quad\quad\quad \Disc_o\dashv \ForgPos:\Pos_o\to\Set_o \hfill
\end{equation}
Accordingly, $\Disc$ preserves all (weighted) limits and $\Disc_o$ preserves all (ordinary) colimits. But $\Disc$ does not preserve all ($\Pos$-enriched) colimits and indeed we will see later that $\Disc$ does not preserve all coinserters.

We will also need the corresponding results on the algebraic side. The inclusion $\ForgBADL:\BA\to\DL$ has a right adjoint $\DLBA$ (mapping a $\DL$ to the largest Boolean subalgebra it contains) and $\ForgBADL_o:\BA_o\to\DL_o$ has a left-adjoint $G:\DL_o\to\BA_o$ (mapping a distributive lattice to the free $\BA$ over it).
\begin{equation}\label{eq:GWK}
\hfill
\DLBA\vdash \BADL: \BA\to\DL
\quad\quad\quad\quad 
\BADL_o\vdash \FreeDLBA : \DL_o\to\BA_o 
\hfill
\end{equation}
Note that \eqref{eq:CDV} and \eqref{eq:GWK} are dually equivalent when restricted to finite structures.

\subsection{The ordered variety of Boolean algebras}\label{sec:BAvariety}

The category $\BA$ of Boolean algebras has discrete homsets, giving rise to a forgetful functor $\BA\to\Set$. This functor is a $\Pos$-enriched, or ordered, variety \cite{2016:ordered-algebras}. At the heart of this observation is the fact that reflexive coinserters in $\BA$ are symmetric, a fact that we prove in detail as it will be important later. 

\begin{proposition}\label{prop:symmetric}
Every reflexive pair
$
\xymatrix@C=8ex
{
A_1\ar@<2pt>[r]^{\pi_1}\ar@<-2pt>[r]_{\pi_0} & A_0 \ar@/_1.5pc/[l]|-i
}$
in $\BA$ is symmetric.
\end{proposition}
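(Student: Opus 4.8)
The plan is to pass to Stone duality and build $s$ by a concrete construction on spaces, the point being that it is zero-dimensionality — equivalently, the presence of complementation in a Boolean algebra — that makes the construction work, and this is exactly what is missing for distributive lattices. Writing $X_j$ for the Stone space of $A_j$, the data $\pi_0,\pi_1,i$ dualizes to a continuous surjection $q\colon X_1\to X_0$ equipped with two continuous sections $p_0,p_1$ (so $q p_0=q p_1=\id$), and a homomorphism $s$ with $\pi_0 s=\pi_1$ and $\pi_1 s=\pi_0$ is the same thing as a continuous map $r\colon X_1\to X_1$ with $r p_0=p_1$ and $r p_1=p_0$. So the task reduces to producing such an $r$.

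Before that I would record the skeleton of the argument on the algebraic side. The image $R\subseteq A_0\times A_0$ of $\langle\pi_0,\pi_1\rangle$ is a \emph{symmetric} subalgebra: since $i$ is a common section we have $(a,a),(b,b)\in R$ whenever $(a,b)\in R$, and then $(b,a)=(a,b)\oplus(a,a)\oplus(b,b)\in R$ because $R$ is closed under the symmetric difference $\oplus$ (a Boolean term built from $\wedge,\vee,\neg$; the same computation makes $R$ transitive, so $R$ is a congruence on $A_0$). This yields the swap automorphism $\sigma\colon R\to R$, and when $\langle\pi_0,\pi_1\rangle$ is injective — i.e.\ the reflexive pair is also a reflexive relation — one gets $s$ immediately as $A_1\cong R\xrightarrow{\ \sigma\ }R\cong A_1$. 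The real content is therefore the general case, where $\langle\pi_0,\pi_1\rangle$ is not injective and $s$ cannot be obtained by factoring through $R$. The baby case is $A_0=\two$, where $\pi_0,\pi_1$ are two characters of $A_1$: choosing any $c\in A_1$ with $\pi_0(c)=1$ and $\pi_1(c)=0$ (and $s=\id$ if $\pi_0=\pi_1$), the map $s(a)=(i\pi_1(a)\wedge c)\vee(i\pi_0(a)\wedge\neg c)$ is a homomorphism because it equals $\langle\pi_0,\pi_1\rangle\colon A_1\to\two\times\two$ followed by the homomorphism $\two\times\two\to A_1$ sending the two atoms to $\neg c$ and $c$, and one checks directly that $\pi_0 s=\pi_1$, $\pi_1 s=\pi_0$. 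Dually, $r$ is constantly $p_1$ on the clopen $c$ and constantly $p_0$ on its complement; this is the step that genuinely uses complementation.

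For general $A_0$ I would build $r$ by glueing/approximation. On the closed set $Z=p_0(X_0)\cup p_1(X_0)$ there is a canonical ``swap'' $r_0$ (send $p_0(y)\mapsto p_1(y)$ and $p_1(y)\mapsto p_0(y)$; the two recipes agree on the overlap, which is where $p_0=p_1$), a homeomorphism of $Z$, and the problem is to extend $r_0$ to a continuous self-map of $X_1$. The idea is to do this fibrewise over $X_0$: over a clopen $U\subseteq X_0$ on which $p_0\ne p_1$ throughout, $p_0(U)$ and $p_1(U)$ are disjoint closed subsets of the clopen $q^{-1}(U)$ and so can be separated by a clopen $C_U$, giving the required map on $q^{-1}(U)$ exactly as in the baby case; over the equalizer of $p_0,p_1$ one takes $r=p_0q=p_1q$. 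The hard part — and the main obstacle — is that the equalizer need not be clopen, so these local solutions do not automatically patch into a single continuous map; making the choices of the separating clopens coherent requires a careful Zorn/inverse-limit argument (presenting $X_1,X_0$ as cofiltered limits of finite spaces compatibly with $q,p_0,p_1$, solving each trivial finite stage, and extracting a compatible family). This coherence step is precisely the place where the Boolean structure is used, and it is why the statement has no analogue over $\DL$.
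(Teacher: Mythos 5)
Your dualization is correct and the finite/``baby'' cases are handled properly, but the proof of the general case is missing, and it is precisely the general case that carries all the weight. Two concrete problems. First, the inverse-limit strategy you invoke does not go through as sketched: writing $A_1$ as a filtered colimit of finite subalgebras closed under $i\pi_0$ and $i\pi_1$ does give a reflexive pair of finite Boolean algebras at each stage, and each finite stage admits a swap, but a swap at a larger finite stage need not map a smaller subalgebra into itself, so the sets of finite-stage solutions do not form an inverse system and there is nothing for a compactness/K\"onig argument to act on. Second, the topological version of the difficulty --- finding a subset of $q^{-1}(D)$, relatively clopen in $q^{-1}(D)$, that separates $p_0(D)$ from $p_1(D)$, where $D$ is the complement of the (closed but in general not open) equalizer of $p_0,p_1$ --- is a separation problem in a locally compact zero-dimensional space that you do not solve; your own observation that the closures of $p_0(D)$ and $p_1(D)$ meet along $p_0(\partial D)$ shows that no single global clopen of $X_1$ will do. So, as a proof that a morphism $s$ with $\pi_0 s=\pi_1$ and $\pi_1 s=\pi_0$ exists, the argument is incomplete at its central step.

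For comparison: the paper's entire proof is, in substance, your second paragraph. It writes down one Boolean term, $t(x,c,d)=\bigl((x\wedge d)\to c\bigr)\wedge\bigl((x\wedge c)\to d\bigr)\wedge(c\vee x\vee d)$ --- a relative of your $x\oplus c\oplus d$ --- and applies it with $c=i\pi_0(x)$, $d=i\pi_1(x)$ to produce, for each $x\in A_1$ with $(\pi_0 x,\pi_1 x)=(a,b)$, an element of $A_1$ with projections $(b,a)$. Your suspicion that this pointwise assignment need not be a homomorphism when $\langle\pi_0,\pi_1\rangle$ is not injective is well founded (one can already check on a reflexive pair of finite powerset algebras with a point lying outside the images of both sections that such a term map fails to preserve joins), so the ``real content'' you identify is genuinely harder than what the term computation delivers. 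But the element-wise statement is all that is used later in the paper: in $\BA$, $\Set$ and $\Pos$ the order on hom-sets is pointwise (or discrete), so ``for every $x$ there is $x'$ with swapped projections'' already forces $h\pi_0\le h\pi_1\Rightarrow h\pi_1\le h\pi_0$, hence that reflexive coinserters coincide with coequalisers. The efficient repair of your write-up is therefore to prove and use that element-wise version --- which your $\oplus$-computation on the image of $\langle\pi_0,\pi_1\rangle$ already establishes, without any topology --- rather than the strictly stronger existence of an endomorphism $s$ of $A_1$, which your analysis correctly exposes as an unresolved extension problem.
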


\begin{proof}
Let $(a,b)\in A_1$. Here, $(a,b)$ is a shorthand for an element in $A_1$ such that $\pi_0((a,b))=a$ and $\pi_1((a,b))=b$.
We write $(a,a)$ for $i(a)$.
Consider the Boolean algebra morphism $\phi: A_1\to A_1$ defined by
\begin{align*}
\phi(a,b)=\Big((a,b) \wedge (b,b)\to (a,a)\Big) \wedge 
\Big((a,b) \wedge (a,a)\to (b,b)\Big) \wedge
\Big((a,a) \vee (a,b) \vee (b,b)\Big)
\end{align*}
\noindent Since the projections are $\BA$-morphisms, we obtain
$\pi_0(\phi(a,b)) =  (a\to b) \wedge (a\vee b) = b$ and 
$\pi_1(\phi(a,b)) =  (b\to a) \wedge (a\vee b) = a$, showing $(b,a)\in A_1$.
\end{proof}
(The argument also works for Heyting algebras.) Note that we can equip $\BA$ with a forgetful functor $\BA\stackrel{\BADL}{\longrightarrow}\DL\to\Pos$ mapping each BA to its carrier in its natural order, but this functor is not an ordered variety since $\BA$ is not closed under weighted limits in $\DL$.

\subsection{Density and Kan extensions}
Much of our technical work revolves around the result that $\Disc:\Set\to\Pos$ is dense, see \cite{2013:PositiveFragments}, and that $\BADL:\BA\to\DL$ is codense, see Theorem~\ref{thm:BADLdense}. This in turn allows us to extend functors on $\Set$ to functors on $\Pos$ via left Kan extension and to extend functors on $\BA$ to functors on $\DL$ via right Kan extensions, as we will review now.

A functor $K:\cal A\to \cal C$ is \emph{dense} if colimit preserving functors $\cal C\to\cal B$ are determined by their restriction along $K$, or, more formally, if the functor $[K,\Id_{\cal B}]:[\cal C, \cal B]\to[\cal A,\cal B]$ restricting along $K$ is fully faithful \cite[Thm 5.1]{1982:kelly}. If, moreover, $K$ itself is fully faithful, then a colimit preserving functor $\cal C\to\cal B$ is the left Kan extension of its restriction along $K$ \cite[Thm 5.29]{1982:kelly}. Furthermore, we may be able to compute left Kan extensions explicitely with the help of a so-called density presentation \cite[Thm 5.19]{1982:kelly}. For example, we know (see \cite{2013:PositiveFragments}) that $\Disc:\Set\to\Pos$ is dense and has a density presentation given by reflexive coinserters of `nerves of posets'. Explicitly, every poset $X$ is the reflexive coinserter 
\footnote{The coinserter of $
\xymatrix@C=8ex
{
VX^\two\ar@<1pt>[r]^{}\ar@<-1pt>[r]_{} &  VX
}
$
in $\Set$ provides an example of weighted colimit that is not preserved by $\Disc$, showing that $\Disc$ cannot have an (enriched) right adjoint.}
\begin{equation}\label{eq:denspresD}
\hfill \xymatrix@C=8ex
{
\Disc\ForgPos X^\two\ar@<2pt>[r]^{\Disc \pi_1}\ar@<-2pt>[r]_{\Disc \pi_0} & \Disc \ForgPos X\ar[r]^{}\ar@/_1.5pc/[l]|-i & X
}\hfill 
\end{equation}
where $\ForgPos X^\two=\{(x,x')\mid x\le_X x'\}$. %
That the coinserter is reflexive means that $\Disc\pi_0\circ i = \Disc\pi_1\circ i = \id$, which is true for $i(x)=(x,x)$.  The fact that these coinserters provide a density presentation means that the left Kan extension of a functor $F:\Set\to\Pos$ along $\Disc$ can be computed as the coinserter
\begin{equation}\label{eq:leftKanalongD}
\hfill\xymatrix@C=8ex
{
F \ForgPos X^\two\ar@<2pt>[r]^{F \pi_1}\ar@<-2pt>[r]_{F \pi_0} & F \ForgPos X\ar[r]^{}
& (\Lan_\Disc F)X
}\hfill
\end{equation}
and we will see examples of this in the next section. 
If one happens to extend along an adjoint functor $K$, Kan-extensions are easier:
\begin{equation}\label{eq:Kanalongadjoint}
\hfill
K\dashv V \ \Longrightarrow\  \Lan_KF=FV 
\hfill
G\dashv K \ \Longrightarrow\   \Ran_KF=FG
\hfill
\end{equation}
This implies that to compute (ordinary) Kan extensions along $\Disc_o$ or $\BADL_o$ we can use \eqref{eq:CDV} or \eqref{eq:GWK}, and \eqref{eq:Kanalongadjoint}. To better understand the difference with extending along $\Disc$ or $\BADL$, we can see the computation of the Kan extensions in two steps:
\begin{equation}\label{eq:2stageKan}
\xymatrix@R=5ex{
{\cal C} \ar@{->}[r]^{H'} & {\cal C} \\
{\cal C}_o \ar[u]_{} \ar@{->}[r]^{\tilde H} & {\cal C}_o \ar[u]^{} \\
{\cal A}_o\ar[u]_{{J_o}}  \ar@/^20pt/[uu]|-{J}  \ar[r]_{H} 
& {\cal A}_o \ar[u]^{{J_o}}\ar@/_20pt/[uu]|-{J}
}
\hfill\hspace{3em}
\xymatrix@R=5ex{
\Pos \ar@{->}[r]^{H'} & \Pos \\
\Pos_o \ar[u]^{} \ar@{->}[r]^{\tilde H} & \Pos_o\ar[u]_{} 
\\
\Set_o
\ar@/^8pt/[u]|-{\Disc_o}_{\;\dashv} \ar@/_8pt/@{<-}[u]|-{\ForgPos} 
\ar[r]_{H} \ar@/^20pt/[uu]|-{\Disc}
& \Set_o \ar[u]^{\Disc_o} \ar@/_20pt/[uu]|-{\Disc}
}
\hfill\hspace{3em}
\xymatrix@R=5ex{
\DL \ar@{->}[r]^{H'} & \DL \\
\DL_o \ar[u]^{} \ar@{->}[r]^{\tilde H} & \DL_o\ar[u]_{} 
\\
\BA_o
\ar@/^8pt/[u]|-{W_o}_{\ \vdash} \ar@/_8pt/@{<-}[u]|-{\FreeDLBA} 
\ar[r]_{H} \ar@/^20pt/[uu]|-{\BADL}
& \BA_o \ar[u]^{W_o} \ar@/_20pt/[uu]|-{\BADL}
}
\hfill
\end{equation}
(i) Since ${\cal C}_o\to{\cal C}$ preserves all ordinary (co)limits we can use \cite[Thm 4.47]{1982:kelly} to break down the extension of $JH$ along $J$ into first extending $J_oH$ along $J_o$ to $\tilde H$ and then extending $\tilde H$ to $H'$. 
(ii) If a functor ${\cal C}_o\to{\cal C}_o$ is locally monotone then this functor is its own extension (both left and right) to ${\cal C}\to{\cal C}$. This means that for locally monotone functors $\tilde H$ the upper square is trivial. 
(iii) To compute $\tilde H$ we can use  \eqref{eq:CDV} or \eqref{eq:GWK}, and \eqref{eq:Kanalongadjoint}. --- While this is sometimes a good approach, the downside is that $\tilde H$ is typically not locally monotone (so we cannot use (i)) and the inclusion ${\cal C}_o\to{\cal C}$ is not fully faithful (so we loose  the good properties of Kan extensions along fully faithful functors). To summarise, to compute Kan extensions along $\Disc$ we use \eqref{eq:leftKanalongD} and for $\BADL$ we will develop a similar presentation in Theorem~\ref{thm:BADLdense}.

\section{Posetification}\label{sec:Posetification}
\newcommand{\EM}{\trianglelefteq}
\newcommand{\Conv}{\mathrm{Conv}}

We define the \emph{posetification $T'$ of a $\Set$-functor $T$} as $\Lan_\Disc \Disc T$, the left Kan-extension of $\Disc T$ along $\Disc$, together with a natural isomorphism $\alpha:\Disc T\Rightarrow T'\Disc$. In concrete examples we will typically define $T'$ so that $\alpha$ is the identity.
In particular, $T'$ is the universal locally monotone extension of $T$, that is, for all $S:\Pos\to\Pos$ and $\beta:\Disc T\Rightarrow S\Disc$, there is a unique $\gamma:T'\to S$ such that $\gamma\circ\alpha=\beta$. The coinserter (\ref{eq:leftKanalongD}) now becomes
\begin{equation}\label{eq:posetification}
\hfill\xymatrix@C=8ex
{
\Disc T \ForgPos X^\two\ar@<6pt>[r]^{\Disc T \pi_1}\ar@<1pt>[r]_{\Disc T \pi_0} & \Disc T \ForgPos X\ar[r]^{e_X}& T' X.
}\hfill
\end{equation}
It is computed for any poset $X$ in the following way:
\begin{enumerate}[(i)]
\item consider the (reflexive) relation $R_T\subseteq T\ForgPos X\times T\ForgPos X $ given by
\[
(a,b)\in R_T \Leftrightarrow \exists c\in T\ForgPos X^\two . \hspace{1ex}T\pi_0(c) = a \hspace{1ex}\& \hspace{1ex} T\pi_1(c)=b
\]
\item compute its transitive closure $\leq_T$
\item quotient $T\ForgPos X$ by the equivalence relation $\equiv_T\hspace{3pt} =\hspace{3pt} \leq_T\hspace{2pt}\cap\hspace{2pt} \geq_T$,
\item the coinserter is given by  $(T\ForgPos X/\equiv_T,\leq_T)$.
\end{enumerate}

\newcommand{\lift}[1][T]{\overline{#1}\hspace{-3pt}}

Note that by definition, $\ForgPos X^\two\subseteq \ForgPos X\times \ForgPos X$ is precisely the graph of the partial order on $X$. It follows that $R_T$ is simply the \emph{lifting of the partial order on $X$ by the functor $T$} (see \cite[Remark 4.8]{2013:PositiveFragments}), often denoted $\lift\leq$. In the rest of this section we will see examples where $R_T$ is transitive, where $R_T$ is transitive and antisymmetric, and where $R_T$ is not even transitive.


\subsection{Posetification of the covariant powerset functor $\cPow$.}

We recall this case here from  \cite{2013:PositiveFragments}  because it illustrates the steps (ii)-(iv) of the posetification procedure very clearly. We start by defining the relation $R_{\cPow}\subseteq \cPow \ForgPos X\times \cPow \ForgPos X$ by
\[
\hfill (a,b)\in R_{\cPow} \Leftrightarrow \exists c\in\cPow \ForgPos X^\two .\, \pi_0[c]=a \ \& \ \pi_1[c]=b\hfill
\]
which means that
\[\hfill
(a,b)\in R_{\cPow} \ \Longleftrightarrow \ (\forall x\in a)( \exists y\in b)\hspace{2pt} .\hspace{2pt} x\leq y\text{ and }(\forall y\in b) (\exists x\in a)\hspace{2pt} .\hspace{2pt}  x\leq y
\hfill\]
It is well-known that $\cPow$ preserves weak-pullbacks, and that this guarantees the transitivity of $R_{\cPow}$. We can thus skip step (i) of the posetification procedure since $R_{\cPow}=\hspace{2pt}\leq_{\cPow}$. 
The relation $R_{\cPow}=\lift[\cPow]\leq$ is known as the Egli-Milner (pre-)order associated with $\leq$. It is not hard to check that $a\equiv_{\cPow} b$, i.e. $a \leq_{\cPow} b$ and $b \leq_{\cPow} a$, iff $\Conv(a)=\Conv(b)$ where $\Conv(a)$ is the convex closure of $a$, i.e. the set $\{x\in X\mid \exists y_1,y_2\in a, y_1\leq x\leq y_2\}$, and that $a\equiv_\cPow \Conv(a)$. It follows that $\cPow' X=(\cPow \ForgPos X/\equiv_{\cPow},R_{\cPow})=(\{\Conv(a)\mid a\subseteq X\},R_{\cPow})$.

\subsection{Posetification of analytic functors}
Consider first a polynomial functor $\poly:\Set\to\Set$ given by a signature $\ari:\Sigma\to\N$ and a collection of set $(A_\sigma)_{\sigma\in\Sigma}$
\[
\begin{cases}
\poly  X=\coprod_{\sigma\in \Sigma}A_\sigma\times X^{\ari(\sigma)}\\
\poly f=\coprod_{\sigma\in \Sigma} \id_{A_\sigma}\times f^{\ari(\sigma)}
\end{cases}
\]
By definition of $\poly$ on morphisms, the relation $R_{\poly[]}\subseteq \poly \ForgPos X\times \poly \ForgPos X$ is given by
\[
\big((a,x_1,\ldots, x_{\ari(\sigma)}),(b,y_1,\ldots, y_{\ari(\sigma')})\big)\in R_{\poly[]} \Leftrightarrow a=b, \sigma=\sigma', x_i\leq y_i, 1\leq i\leq \ari(\sigma)
\]
Since polynomial functors preserve weak-pullbacks we have $R_{\poly[]}=\lift[\poly]\leq \hspace{2pt}=\hspace{2pt}\leq_{\poly[]}$. Moreover, it is easy to see from the definition above that $\leq_{\poly[]}$ is anti-symmetric (since $\leq$ is). It follows that $\poly' X$ is simply given by $(\poly \ForgPos X,R_{\poly[]})$. 

\noindent We can now compute the posetification of \emph{analytic functors} (\cite{1986:JoyalFoncteursAnal}), i.e. functors of the shape
\[
\ana  X=\coprod_{\sigma\in \Sigma}A_\sigma\times (X^{\ari(\sigma)}/G_\sigma)
\]
where each quotient $X^{\ari(\sigma)}/G_\sigma$ is taken with respect to the obvious action of a subgroup of the permutation group $G_\sigma\subseteq \mathrm{Perm}(\ari(\sigma))$ on the tuples of $X^{\ari(\sigma)}$. The most well-known example is the `bag' or `multiset' functor which is given by the choice $\Sigma=\N$, $\ari=\id_{\N}$ and $G_n=\mathrm{Perm}(n), n\in \N$. Analytical functors preserve weak-pullbacks (in fact wide pullbacks, see \cite{2011:AdamekPresentation}), and thus $R_{\ana[]}=\leq _{\ana[]}$.


\begin{proposition}
The posetification of an analytic functor $\ana:\Set\to\Set$ is given by $\ana' X=(\ana \ForgPos X, \lift[\ana]\leq)$.
\end{proposition}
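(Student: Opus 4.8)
The plan is to run the four-step posetification recipe (i)--(iv) following \eqref{eq:posetification} for the functor $\ana$ and to observe that, for analytic functors, all steps after the first one collapse. As already recorded in the text, $R_{\ana[]}$ is exactly the lifting $\lift[\ana]\leq$; it is reflexive (it is the lifting of a reflexive relation, equivalently because $\ForgPos X^\two=\{(x,x')\mid x\le_X x'\}$ contains the diagonal), and, since analytic functors preserve weak pullbacks, it is transitive, so $\leq_{\ana[]}=R_{\ana[]}$ and step (ii) is vacuous. Everything therefore reduces to showing that $R_{\ana[]}$ is \emph{antisymmetric}: once we know this, $\equiv_{\ana[]}$ is the identity relation, step (iii) does nothing, and step (iv) returns precisely $\ana'X=(\ana\ForgPos X,\lift[\ana]\leq)$. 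On a discrete poset $X$ one has $\ForgPos X=|X|$ and $\lift[\ana]\leq$ discrete, so $\ana'\Disc X=\Disc\ana X$ and the comparison $\alpha\colon\Disc\ana\Rightarrow\ana'\Disc$ is the identity, as in our other examples.

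To make the antisymmetry argument concrete I would first spell out $R_{\ana[]}$. Write an element of $\ana\ForgPos X$ as a pair $(a,[\vec x])$ with $\sigma\in\Sigma$, $a\in A_\sigma$, and $[\vec x]$ the $G_\sigma$-orbit of a tuple $\vec x\in(\ForgPos X)^{\ari(\sigma)}$. Unfolding the action of $\ana$ on morphisms together with the description of $\ForgPos X^\two$ gives
\[
(a,[\vec x])\mathrel{R_{\ana[]}}(b,[\vec y])\ \Longleftrightarrow\ a=b,\ \sigma=\sigma',\ \text{and}\ \exists\,\rho\in G_\sigma\ \text{with}\ x_i\le_X y_{\rho(i)}\ \text{for all}\ 1\le i\le\ari(\sigma),
\]
the permutation $\rho$ absorbing the freedom in the choice of representatives of the two orbits; this bookkeeping is the only slightly fiddly part of the unfolding.

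For antisymmetry, suppose $(a,[\vec x])\mathrel{R_{\ana[]}}(a,[\vec y])$ and $(a,[\vec y])\mathrel{R_{\ana[]}}(a,[\vec x])$, witnessed by $\rho,\tau\in G_\sigma$ with $x_i\le y_{\rho(i)}$ and $y_j\le x_{\tau(j)}$ for all $i,j$. Composing (take $j=\rho(i)$) gives $x_i\le y_{\rho(i)}\le x_{\tau(\rho(i))}$, so with $\mu:=\tau\rho\in G_\sigma$ we get $x_i\le x_{\mu(i)}$ for every $i$, hence by iteration $x_i\le x_{\mu(i)}\le\dots\le x_{\mu^m(i)}$. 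Since $\mu$ lies in the finite group $\mathrm{Perm}(\ari(\sigma))$ it has finite order $m$, so $x_{\mu^m(i)}=x_i$ and the whole chain consists of equalities; in particular $x_i=x_{\tau(\rho(i))}$. Feeding this back into $x_i\le y_{\rho(i)}\le x_{\tau(\rho(i))}=x_i$ forces $y_{\rho(i)}=x_i$ for all $i$, i.e.\ $\vec y$ and $\vec x$ lie in the same $G_\sigma$-orbit, so $[\vec x]=[\vec y]$. This establishes antisymmetry, and with it the proposition.

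The one genuine obstacle is this antisymmetry argument, and within it the crucial point is the \emph{finiteness of the arities} (equivalently of the groups $G_\sigma$), which is what makes the order-chain $x_i\le x_{\mu(i)}\le\dots$ collapse; for an infinitary analogue of an analytic functor antisymmetry may fail, and then steps (iii)--(iv) would be non-trivial. Everything else is either already in the text (transitivity from weak-pullback preservation, and the identification $R_{\ana[]}=\lift[\ana]\leq$) or a routine unfolding of the definition of $\ana$ on morphisms.
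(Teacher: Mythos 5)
Your proof is correct and follows essentially the same route as the paper's: unfold $R_{\ana[]}$ via a permutation witness, compose the two witnesses into a self-inequality $x_i\le x_{\mu(i)}$, and use the finite order of $\mu\in G_\sigma$ to collapse the chain into equalities, forcing the two orbits to coincide. Your write-up is, if anything, slightly more explicit about the iteration and about why finiteness of the arities is the crux.
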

\begin{proof}
To simplify the notation we assume that each $A_\sigma=1$, fix an element $\sigma$ with arity $\ari(\sigma)=n$, and denote by $[(x_1,\ldots,x_n)]$ the equivalence class of the tuple $(x_1,\ldots,x_n)$ under the action of $G_\sigma$. Note that by definition of $\ana$, two elements of $\ana \ForgPos X$ can only be related by $\leq_{\ana[]}$ if they belong to the same $\sigma$-component of the coproduct. Moreover, if $[(x_1,\ldots,x_n)]\leq_{\ana[]} [(y_1,\ldots,y_n)]$, then by definition there exists a permutation $\pi\in G_\sigma$ such that $(x_1,\ldots,x_n) \leq (y_{\pi(1)},\ldots,y_{\pi(n)})$ (where $\leq$ here is component-pointwise). Similarly, if $[(y_1,\ldots,y_n)]\leq_{\ana[]} [(x_1,\ldots,x_n)]$, there exists a permutation $\rho\in G_\sigma$ such that $(y_1,\ldots,y_n) \leq (x_{\rho(1)},\ldots,x_{\rho(n)})$. It follows that $(x_1,\ldots,x_n) \leq (x_{\pi(\rho(1))},\ldots,x_{\pi(\rho(n))})$, and since $\pi\rho$ is of finite order we easily get by iterating at most $n$ times that $(x_1,\ldots,x_n) = (x_{\pi(\rho(1))},\ldots,x_{\pi(\rho(n))})$. This in turn implies that $(y_{\pi(1)},\ldots,y_{\pi(n)})\leq (x_1,\ldots,x_n)$, from which we can conclude that $[(x_1,\ldots,x_n)]=[(y_1,\ldots,y_n)]$. Thus $\leq_{\ana[]}=\lift[\ana]\leq$ is anti-symmetric.
 
%
\end{proof}
In particular the posetification of the bag functor $\Mset$ is given by $\Mset' (X,\leq)=(\Mset \ForgPos X, \lift[\Mset]\leq)$.


\subsection{Posetification of the monotone neighbourhood functor $\MNb$} 

Recall that the \emph{monotone neighbourhood functor} $\MNb:\Set\to\Set$ if defined on sets by 
\[\hfill
\MNb X=\{A\subseteq \cPow X\mid U\in A, U\subseteq V\Rightarrow V\in A\}\hfill
\]
and on functions $f:X\to Y$ by taking the double inverse image $(f\inv)\inv: \MNb X\to\MNb Y$. It is not hard to check that $\MNb f$ can be described  more simply as $\MNb f(A)=\up f[A]$, where $\up f[A]$ is the upward closure (under inclusion) of the direct image of $A$ by $f$. With this in place we can compute the coinserter (\ref{eq:posetification})
\[\hfill
\xymatrix@C=8ex
{
\Disc \MNb \ForgPos X^\two\ar@<6pt>[r]^{\uparrow \pi_1[-]}\ar@<1pt>[r]_{\uparrow\pi_0[-]} & \Disc \MNb \ForgPos X\ar[r]^{e_X}& \MNb' X.
}\hfill
\]
This time we need to consider the relation $R_{\MNb}\subseteq \MNb \ForgPos X\times \MNb \ForgPos X$ defined by
\[\hfill
(A,B)\in R_{\MNb}\Leftrightarrow\exists C\in \MNb \ForgPos X^\two . \up \pi_0[C]=A \hspace{4pt} \& \hspace{1pt} \up \pi_1[C]=B\hfill
\]
It is known (see \cite{2004:hansenMonotone}) that $\MNb$ does \emph{not} preserve weak pullbacks, and in particular we cannot assume that the relation $R_M$ is transitive. The proof of the following result can essentially be found in Theorem 8.25 of \cite{2003:HelleMastersThesis}\footnote{We thank Clemens Kupke for pointing out this reference.}
\begin{proposition}\label{prop:MNb}
$(A,B)\in {\leq_{\MNb}}$ iff $\forall a\in A.\exists b\in B.\up b\subseteq \up a$ and $\forall b\in B.\exists a\in A.\down a\subseteq \down b$.
\end{proposition}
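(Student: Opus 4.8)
The plan is to identify $\leq_{\MNb}$ as the transitive closure of the relation $R_{\MNb}$, so the strategy splits into two halves: showing that the relation described in the statement --- call it $\preceq$, where $A\preceq B$ iff $\forall a\in A.\exists b\in B.\up b\subseteq \up a$ and $\forall b\in B.\exists a\in A.\down a\subseteq \down b$ --- is (a) transitive and contains $R_{\MNb}$, and (b) is contained in the transitive closure of $R_{\MNb}$, i.e. every instance $A\preceq B$ can be factored through a finite chain of $R_{\MNb}$-steps. Transitivity of $\preceq$ is a routine diagram chase: given witnesses for $A\preceq B$ and $B\preceq C$, composing the ``$\exists$'' choices and using transitivity of $\subseteq$ on upsets and downsets yields witnesses for $A\preceq C$; I would dispatch this quickly. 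For $R_{\MNb}\subseteq\preceq$, I unpack the definition: if $C\in\MNb\ForgPos X^\two$ with $\up\pi_0[C]=A$ and $\up\pi_1[C]=B$, then for $a\in A$ there is some $U\in C$ with $\pi_0[U]\subseteq a$, hence $U$ is (coordinatewise) a relation between points and their $\leq$-successors; from this $b:=\pi_1[U]\in B$ satisfies $\up b\subseteq\up a$ (each element of $a$ dominates the corresponding element of $b$... more precisely each element of $\pi_0[U]$ lies below the matching element of $\pi_1[U]$, so $\up\pi_1[U]\subseteq\up\pi_0[U]$, and upward closure in $\MNb$ does the rest). The symmetric argument gives the $\down$-condition.

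The substantive direction is $\preceq\subseteq{\leq_{\MNb}}$: given $A\preceq B$ I must build an explicit finite zig-zag $A = A_0\mathrel{R_{\MNb}^{\pm}}A_1\mathrel{R_{\MNb}^{\pm}}\cdots\mathrel{R_{\MNb}^{\pm}}A_n = B$ (recalling that $\leq_{\MNb}$ is the transitive closure of the \emph{reflexive} relation $R_{\MNb}$, which as a relation on the carrier need not be symmetric, so one must be careful about direction). The natural candidate intermediate neighbourhood systems are obtained by interleaving the two witnessing functions --- e.g. pass from $A$ to $\up\{\, a\cup b_a \mid a\in A\,\}$ or to the upward closure of a set built from paired-up elements, realised as $\up\pi_i[C]$ for a suitable $C\subseteq\ForgPos X^\two$. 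I expect the proof to mirror the known argument for the Egli--Milner / monotone case: first handle the case where $A\preceq B$ is witnessed ``on the nose'' (a single span $C$), then observe that a general $A\preceq B$ decomposes as $A\mathrel{R_{\MNb}}A'$ and $B\mathrel{R_{\MNb}}A'$ (or $A'\mathrel{R_{\MNb}}B$) for the common refinement $A'$ formed from the matched pairs, giving a zig-zag of length two. Checking that such $A'$ genuinely lies in $\MNb\ForgPos X$ and that the span objects $C$ live in $\MNb\ForgPos X^\two$ (closure under the appropriate upsets) is the bookkeeping to be careful about.

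The main obstacle, as the reference to Helle's thesis signals, is precisely this non-transitivity of $R_{\MNb}$: one cannot simply read off $\leq_{\MNb}$ from a single span but must produce the mediating neighbourhood systems witnessing a chain, and verify they are monotone-closed and compatible with the span projections $\pi_0,\pi_1$. A secondary subtlety is keeping straight that $R_{\MNb}$ is not symmetric here (unlike the powerset case, where convex closure made $\equiv_{\cPow}$ clean), so when inverting an $R_{\MNb}$-step in the zig-zag one needs the reverse span, which requires swapping the roles of $\pi_0$ and $\pi_1$ and re-checking the upward-closure conditions. Once the zig-zag of length at most two is in hand, both inclusions combine to give equality, and since $\preceq$ was already shown transitive, it equals the transitive closure, completing the proof.
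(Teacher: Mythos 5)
Your overall skeleton is the right one, and it matches the argument that the paper itself does not spell out but delegates to Theorem 8.25 of Hansen's thesis: show that the stated relation $\preceq$ is transitive, that $R_{\MNb}\subseteq{\preceq}$ (hence ${\leq_{\MNb}}\subseteq{\preceq}$), and that ${\preceq}\subseteq{\leq_{\MNb}}$. The first two parts are essentially fine as sketched (modulo the sentence ``each element of $a$ dominates the corresponding element of $b$'', which is backwards, though you correct it in the same breath).

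The genuine gap is in the third, substantive part. First, your primary decomposition $A\mathrel{R_{\MNb}}A'$ and $B\mathrel{R_{\MNb}}A'$ is a cospan: composing it requires reversing the second step, which only places $(A,B)$ in the \emph{equivalence relation} generated by $R_{\MNb}$, not in $\leq_{\MNb}$. By step (ii) of the posetification procedure, $\leq_{\MNb}$ is the transitive closure of $R_{\MNb}$, i.e.\ a composite of forward steps only; a zig-zag with reversed steps would at best identify $\equiv_{\MNb}$-classes, whereas the proposition characterises the order. You do mention the correct shape $A\mathrel{R_{\MNb}}A'\mathrel{R_{\MNb}}B$ in parentheses, but you neither commit to it nor say what $A'$ is. Second, and more importantly, the real content of this direction is the construction of $A'$ and of the two witnessing spans $C\in\MNb\ForgPos X^\two$, together with the verification that $\up\pi_0[C]$ and $\up\pi_1[C]$ equal $A$ and $A'$ (resp.\ $A'$ and $B$) \emph{exactly}, not merely up to inclusion. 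The naive single-span candidate $C=\up\{(a\times b_a)\cap{\leq}\mid a\in A\}$, where $b_a\in B$ is the witness chosen for $a$, fails precisely here: $\pi_0[(a\times b_a)\cap{\leq}]$ is only a subset of $a$ and need not belong to the up-closed family $A$, so $\up\pi_0[C]$ overshoots $A$. This is exactly where the failure of $\MNb$ to preserve weak pullbacks bites, and where a mediating neighbourhood system such as $\down(\up(A))$ (the canonical representative appearing in the corollary that follows the proposition) has to be introduced and checked. As written, the proposal names this obstacle but does not overcome it, so the inclusion ${\preceq}\subseteq{\leq_{\MNb}}$ is not established.
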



For any $A\in\MNb \ForgPos X$ and $a\in A$, we write $\up a$ for the upward closure of $a$ under the order of $X$, and $\down(\up (A))$ for the set $\down\{\up a\mid a\in A\}$, where the downward closure is taken with respect to the inclusion. The following corollaries are then easy to check.

\begin{corollary}
$A\equiv_\MNb B$ iff $\down(\up(A))=\down(\up(B))$, and moreover $A\equiv_\MNb \down(\up(A))$.
\end{corollary}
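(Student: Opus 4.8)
The plan is to derive the corollary directly from Proposition~\ref{prop:MNb}, which characterises $\leq_\MNb$, by unfolding the definition $\equiv_\MNb \;=\; \leq_\MNb \cap \geq_\MNb$. Since $A \equiv_\MNb B$ means both $(A,B)\in \leq_\MNb$ and $(B,A)\in \leq_\MNb$, applying Proposition~\ref{prop:MNb} twice gives the four conditions: $\forall a\in A.\exists b\in B.\up b\subseteq \up a$; $\forall b\in B.\exists a\in A.\down a\subseteq \down b$; $\forall b\in B.\exists a\in A.\up a\subseteq \up b$; and $\forall a\in A.\exists b\in B.\down b\subseteq \down a$. The task is then to show that the conjunction of these four conditions is equivalent to $\down(\up(A)) = \down(\up(B))$, where $\down(\up(A))$ is the down-closure (under $\subseteq$) of $\{\up a \mid a\in A\}$ inside $\cPow X$.

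First I would prove the inclusion $\down(\up(A)) \subseteq \down(\up(B))$ assuming the four conditions. An element of $\down(\up(A))$ is some $U\subseteq X$ with $U\subseteq \up a$ for some $a\in A$. Using the first condition pick $b\in B$ with $\up b\subseteq \up a$; this alone does not suffice, so the key observation I would use is that the combination ``$\up b\subseteq \up a$'' together with the antisymmetry of $X$'s order forces a relationship tight enough to relocate $U$ below some $\up b'$. More carefully, I expect the right route is: from conditions one and three applied in tandem one shows that for every $a\in A$ there is $a'\in A$ and $b\in B$ with $\up a' \subseteq \up b \subseteq \up a$, and iterating (using that $X$ need not be finite, so one argues instead via minimal-element / directedness reasoning, or simply notes $\down(\up(A))$ is determined by the \emph{minimal} upsets $\up a$) collapses things so that the family $\{\up a\mid a\in A\}$ and $\{\up b \mid b\in B\}$ have the same set of $\subseteq$-minimal members up to the down-closure. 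The symmetric argument gives the reverse inclusion, yielding $\down(\up(A)) = \down(\up(B))$.

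Conversely, I would assume $\down(\up(A)) = \down(\up(B))$ and recover the four conditions. Given $a\in A$, the set $\up a$ lies in $\down(\up(A)) = \down(\up(B))$, so $\up a \subseteq \up b$ for some $b\in B$; this is condition three. Condition one is obtained symmetrically, and conditions two and four follow from one and three after observing that for points of a poset, $\up a\subseteq \up b \iff b\le a \iff \down b\subseteq \down a$, so the ``$\up$'' statements and the ``$\down$'' statements are in fact contrapositives of one another once phrased in terms of the order. The ``moreover'' clause $A\equiv_\MNb \down(\up(A))$ then follows immediately from the first part of the corollary, since $\down(\up(\down(\up(A)))) = \down(\up(A))$ (the operator $\down\circ\up$ applied to a set of subsets is idempotent on its image, because the $\subseteq$-minimal members of $\down(\up(A))$ are exactly the $\up a$ that are minimal among $\{\up a'\mid a'\in A\}$, and down-closing again changes nothing); strictly one should note $\down(\up(A)) \in \MNb \ForgPos X$, i.e.\ it is upward closed under $\subseteq$, which holds by construction.

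The main obstacle I anticipate is the forward direction's bookkeeping: Proposition~\ref{prop:MNb}'s conditions mix $\up$ on one side and $\down$ on the other, whereas the corollary is stated purely in terms of $\down(\up(-))$, so the real content is reconciling these presentations. The clean way through is the pointwise identity $\up a \subseteq \up b \iff \down b\subseteq \down a$ valid because $\up a\subseteq\up b$ says $a\in\up b$, i.e.\ $b\le a$; once this is in hand the four conditions of the doubled Proposition~\ref{prop:MNb} collapse to two genuinely distinct ones, namely ``$\forall a\in A\,\exists b\in B.\,\up b\subseteq\up a$'' and its mirror ``$\forall b\in B\,\exists a\in A.\,\up a\subseteq\up b$'', and these two are exactly the statement that $\{\up a\mid a\in A\}$ and $\{\up b\mid b\in B\}$ are cofinal in each other under $\supseteq$, which is precisely $\down(\up(A))=\down(\up(B))$. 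I would make sure to handle the non-finite case by phrasing everything in terms of this cofinality rather than picking actual minima.
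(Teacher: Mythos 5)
Your overall strategy---unfold $\equiv_\MNb$ into the four conditions obtained by applying Proposition~\ref{prop:MNb} to both $(A,B)$ and $(B,A)$, then match them against $\down(\up(A))=\down(\up(B))$---is the natural one (the paper offers no proof, asserting the corollary is ``easy to check''), but your pivotal step is false. You collapse the four conditions to two via the ``pointwise identity'' $\up a\subseteq\up b\iff\down b\subseteq\down a$, justified by reading $\up a\subseteq\up b$ as $a\in\up b$, i.e.\ $b\le a$. That is valid only when $a$ and $b$ are \emph{points} of $X$; here $a\in A$ and $b\in B$ are \emph{subsets} of $X$ (neighbourhoods), and $\up a$, $\down a$ denote the up- and down-closures of those subsets. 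For subsets the equivalence fails: with $X=\{x,y\}$ discrete, $a=\{x\}$, $b=\{x,y\}$ give $\up a\subseteq\up b$ but $\down b\not\subseteq\down a$. Hence the two ``$\down$'' conditions of Proposition~\ref{prop:MNb} are not reformulations of the ``$\up$'' conditions and cannot be discarded: they record which \emph{downsets} of $X$ belong to $A$, information independent of which upsets belong to $A$.

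Concretely, take $X=\{x<y\}$, $A=\{\{x,y\}\}$ and $B=\{\{x\},\{x,y\}\}$, both upward closed under inclusion and hence in $\MNb\ForgPos X$. Then $\{\up a\mid a\in A\}=\{\up b\mid b\in B\}=\{\{x,y\}\}$, so the two families are mutually cofinal and your argument concludes $A\equiv_\MNb B$. But the second condition of Proposition~\ref{prop:MNb} fails for the pair $(A,B)$: for $b=\{x\}$ there is no $a\in A$ with $\down a\subseteq\down b=\{x\}$, since $\down\{x,y\}=\{x,y\}$. So $A\not\leq_\MNb B$ and $A\not\equiv_\MNb B$. Your criterion is therefore strictly coarser than $\equiv_\MNb$; any correct argument must track the pair consisting of the upsets of $X$ lying in $A$ \emph{and} the downsets of $X$ lying in $A$ (equivalently $\{\up a\mid a\in A\}$ and $\{\down a\mid a\in A\}$), and explain how the stated invariant captures both halves---eliminating the $\down$-half as you do cannot be repaired. (Separately, your converse direction slips on quantifiers: from $\down(\up(A))=\down(\up(B))$ you derive $\forall a\in A\,\exists b\in B.\,\up a\subseteq\up b$, which is not condition three of your own unfolding, namely $\forall b\in B\,\exists a\in A.\,\up a\subseteq\up b$.)
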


\begin{corollary}
The posetification $\MNb$ is given by $\MNb' X=(\DO(\UP(X)),\leq_{\MNb})$.
\end{corollary}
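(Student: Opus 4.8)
The plan is to read the statement off the posetification recipe~(i)--(iv) and the corollary just above. By step~(iv) together with Proposition~\ref{prop:MNb}, $\MNb' X$ is by definition the poset $(\MNb\ForgPos X/{\equiv_\MNb},\leq_\MNb)$, so all that is needed is to identify this quotient, as an ordered set, with $(\DO(\UP(X)),\leq_\MNb)$.

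First I would use the preceding corollary to trade the quotient for a set of canonical representatives. The equivalence ``$A\equiv_\MNb B$ iff $\down(\up(A))=\down(\up(B))$'' says precisely that $A\mapsto\down(\up(A))$ is constant on $\equiv_\MNb$-classes and separates distinct classes, hence factors as an injection of $\MNb\ForgPos X/{\equiv_\MNb}$ onto its image $\{\down(\up(A))\mid A\in\MNb\ForgPos X\}$; the clause ``$A\equiv_\MNb\down(\up(A))$'' adds that $\down(\up(A))$ already lies in the class of $A$, so it \emph{is} a representative and the map is also surjective onto the quotient, i.e.\ a bijection. Transporting $\leq_\MNb$ along this bijection, $\MNb' X$ becomes the set of canonical forms $\{\down(\up(A))\}$ ordered by the relation inherited from $\leq_\MNb$; Proposition~\ref{prop:MNb} can be specialised to spell this order out directly on the $\down(\up(A))$'s, and it is this order the corollary denotes again by $\leq_\MNb$.

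It remains to check that the set of canonical forms is exactly $\DO(\UP(X))$. The inclusion $\subseteq$ is immediate: for each $a\in A$ the set $\up a$ is an upset of $X$, and taking the downward closure with respect to inclusion of the family $\{\up a\mid a\in A\}$ yields an inclusion-down-closed collection of upsets of $X$, i.e.\ an element of $\DO(\UP(X))$. For the converse one must exhibit, for an arbitrary $D\in\DO(\UP(X))$, a monotone neighbourhood family $A\in\MNb\ForgPos X$ with $\down(\up(A))=D$; the natural candidate is the one generated by $D$ itself, and the verification that $\down(\up(-))$ sends it back to $D$ uses exactly that $D$ is downward closed. This surjectivity is the only step that is not bookkeeping, and it is where I expect the work to lie; once it is in place the order claim follows from the previous paragraph, and the coinserter presentation of $\MNb'$ at the head of the subsection identifies the comparison map $e_X$ with $A\mapsto\down(\up(A))$, giving $\MNb' X=(\DO(\UP(X)),\leq_\MNb)$ as stated.
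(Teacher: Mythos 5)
You are right that the paper offers no argument here (it declares both corollaries ``easy to check''), so there is nothing to compare routes against; the question is whether your plan closes the gap, and I do not think it does. The step you yourself flag as ``where the work lies'' --- surjectivity of $A\mapsto\down(\up(A))$ onto $\DO(\UP(X))$ --- fails for the candidate you propose, and the failure points at a real issue rather than bookkeeping. Every nonempty $A\in\MNb\ForgPos X$ is upward closed under inclusion and therefore contains $X$ itself; hence $\up X=X$ belongs to $\{\up a\mid a\in A\}$, and the downward closure of that family under inclusion contains \emph{every} upset of $X$. Read literally, $\down(\up(A))=\UP(X)$ for every nonempty $A$, so the map takes at most two values, is certainly not onto $\DO(\UP(X))$, and your witness ``the family generated by $D$'' cannot return $D$ when $D$ is, say, the downward closure in $\UP(X)$ of a single proper upset. (Relatedly, $\down(\up(A))$ is downward closed, hence not an element of $\MNb\ForgPos X$ at all, so the claim that it is a \emph{representative} of the class of $A$ does not typecheck.) Your proof inherits these problems by taking the preceding corollary and the notation at face value; a correct argument must first say what $\down(\up(-))$ and $\DO(\UP(X))$ are actually meant to denote.

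What Proposition~\ref{prop:MNb} really yields, once unwound for upward-closed $A,B$, is that $A\leq_\MNb B$ iff every upset belonging to $A$ belongs to $B$ and every downset belonging to $B$ belongs to $A$: the first clause of the proposition is equivalent to $A\cap\UP(X)\subseteq B$ (take $b=\up a$ one way, use upward closure of $B$ the other), and the second to $B\cap\DO(X)\subseteq A$, where $\DO(X)$ denotes the downward-closed subsets of $X$. Hence the $\equiv_\MNb$-class of $A$ is the \emph{pair} $(A\cap\UP(X),\,A\cap\DO(X))$, not a single downward-closed family of upsets. Already for $X=\{0<1\}$ the six upward-closed families $\emptyset$, $\{X\}$, $\{\{0\},X\}$, $\{\{1\},X\}$, $\{\{0\},\{1\},X\}$, $\cPow X$ are pairwise inequivalent (e.g.\ $\{X\}\not\equiv_\MNb\{\{0\},X\}$ because $\{0\}$ is a downset in the second but $X$ is the only downset in the first), whereas $\DO(\UP(X))$, read as inclusion-downward-closed families of upsets of the three-element chain $\UP(X)$, has only four elements. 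So the identification of the carrier with $\DO(\UP(X))$ cannot be obtained by your canonical-form argument as written; the missing content is precisely a reconciliation of the two-component invariant coming from Proposition~\ref{prop:MNb} with whatever structure the corollary's notation is intended to name, and that is the part your proposal defers.
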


\subsection{Posetification of the neighbourhood functor $\Nb$.} 

Consider the adjunctions $\FreeBA\dashv\BASet:\BA\to\Set$ and $\FreeCABA\dashv\CABASet:\CABA\to\Set$. 
The monad $\ForgBA\FreeBA$ is naturally isomorphic to the finitary version $\Nb_f$ of the neighbourhood functor $\Nb$. A natural isomorphism is given by the natural transformation $\alpha: \Nb_f\to \Forg_\BA\Free_\BA$ given at each $X$ by
\[\hfill
\alpha_X(A)=\bigvee\{\bigwedge a\wedge \bigwedge (a)^c\mid a\in A\}\hfill
\]
which is indeed a boolean term since $A$ and each $a\in A$ are finite. The inverse of $\alpha$ is built as follows: given a boolean term over $X$ in conjunctive normal form, check for each clause $\bigwedge_{p\in a_1}p\wedge \bigwedge_{q\in a_2}\neg q$ if $a_1\cup a_2=X$, if not rewrite the clause as the equivalent CNF expression
\[\hfill
\bigvee\{\bigwedge_{p\in a_1\cup a_3}\hspace{-2ex}p\hspace{1ex}\wedge \hspace{-2ex}\bigwedge_{q\in a_2\cup (a_3)^c} \hspace{-3ex}\neg q\mid a_3\subseteq X\setminus (a_1\cup a_2)\}\hfill
\]
This yields a finite disjunction $\bigvee\{\bigwedge_{p\in a_i} p\wedge\bigwedge_{q\notin a_i}\neg q\mid i\in I\}$, which we associate with $\{a_i\}_{i\in I}\in \Nb_f X$.
Similarly, the monad $\ForgCABA\FreeCABA$ is naturally isomorphic to the full neighbourhood functor $\Nb$. We can use the special properties of the adjunctions above to compute the posetification of $\Nb_f$ and $\Nb$ indirectly, but relatively straightforwardly. 

\begin{proposition}\label{prop:FreeCoinserter}
Let $X$ be a poset presented by the coinserter $\Disc \ForgPos X^\two \rightrightarrows \Disc \ForgPos X \stackrel{c'}{\longrightarrow} X$ and let $\Free\dashv\Forg$ denote either of the adjunctions $\FreeBA\dashv\ForgBA$ or $\FreeCABA\dashv\ForgCABA$, then the coinserter of $\Free \ForgPos X^\two \rightrightarrows \Free \ForgPos X$ is $\Free c: \Free \ForgPos X\to\Free \Con X$, where $c:\ForgPos X\to \Con X$ is the adjoint of $c'$.
\end{proposition}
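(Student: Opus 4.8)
The plan is to reduce the statement to two entirely elementary facts: that $\Set$, $\BA$ and $\CABA$ all have discrete hom-posets, so that in each of them a coinserter is nothing but a coequaliser, and that left adjoints preserve colimits. The whole argument then amounts to transporting the density presentation \eqref{eq:denspresD} of $X$ along the composite left adjoint $\Free\circ\Con$.

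First I would recall that, $\Disc$ being dense, $X$ is the reflexive coinserter \eqref{eq:denspresD}, namely $\Disc\ForgPos X^\two\rightrightarrows\Disc\ForgPos X\stackrel{c'}{\to}X$ in $\Pos$ with the two arrows $\Disc\pi_0,\Disc\pi_1$. By \eqref{eq:CDV} the functor $\Con:\Pos\to\Set$ is a $\Pos$-enriched left adjoint, hence preserves all weighted colimits and in particular coinserters. Applying $\Con$ to \eqref{eq:denspresD} and using $\Con\Disc=\Id_\Set$ — so that $\Con$ sends $\Disc\pi_i$ to $\pi_i$ and $c'$ to $\Con c'$, which is exactly the map $c:\ForgPos X\to\Con X$ appearing in the statement (its adjoint transpose) — we conclude that $\ForgPos X^\two\rightrightarrows\ForgPos X\stackrel{c}{\to}\Con X$ is a coinserter in $\Set$. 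Since $\Set$ has discrete hom-posets, the inequality $c\circ\pi_0\le c\circ\pi_1$ forced by a coinserter collapses to $c\circ\pi_0=c\circ\pi_1$, and the coinserter is a coequaliser; thus $c$ is the coequaliser of $\pi_0,\pi_1$ in $\Set$ — concretely, the quotient of $\ForgPos X$ by the equivalence relation generated by $\le_X$, i.e. the passage to connected components.

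Next I would apply $\Free$ (either $\FreeBA$ or $\FreeCABA$). By hypothesis $\Free\dashv\Forg$, so $\Free$ is an ordinary left adjoint and preserves coequalisers; hence $\Free\ForgPos X^\two\rightrightarrows\Free\ForgPos X\stackrel{\Free c}{\to}\Free\Con X$ is a coequaliser in $\BA$ (resp.\ $\CABA$), with the two arrows $\Free\pi_0,\Free\pi_1$. Finally, since $\BA$ (resp.\ $\CABA$) again has discrete hom-posets, this coequaliser is precisely the coinserter of $\Free\pi_0,\Free\pi_1$, which is the assertion.

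I do not anticipate a genuine obstacle: the proof is just a matter of pushing \eqref{eq:denspresD} through $\Free\circ\Con$ and invoking $\Con\Disc=\Id$. The one point demanding a little care is the bookkeeping around the word ``coinserter'': one must observe that in a locally discrete $\Pos$-category the coinserter inequation degenerates to an equation, so that there coinserters coincide with coequalisers — it is exactly this that allows an ordinary, not a priori enriched, adjoint like $\Free$ to do the work, even though $\Free$ is not a functor out of $\Pos$. One should also spell out, as above, that the map $c$ named in the statement equals $\Con c'$, so that it is indeed the coequaliser map produced in the middle step.
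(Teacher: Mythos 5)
Your proof is correct, but it routes around the paper's argument at two points, so a comparison is worth recording. The paper first observes, elementarily, that $c$ merely \emph{coequalizes} $\pi_0,\pi_1$ (two elements in the same connected component), reduces ``coinserter'' to ``coequaliser'' by invoking Proposition~\ref{prop:symmetric} (every reflexive pair in $\BA$ is symmetric, and a symmetric coinserter is a coequaliser), and then verifies the universal property of $\Free c$ by hand, transposing a test map $d$ along $\Free\dashv\Forg$ and factoring its transpose through $c$. You replace the hand verification by ``left adjoints preserve colimits'', applied twice: the enriched left adjoint $\Con$ pushes the density presentation \eqref{eq:denspresD} down to $\Set$, exhibiting $c=\Con c'$ as the coequaliser of $\pi_0,\pi_1$ (a point the paper leaves implicit when it asserts that $d'$ factors through $c$), and the ordinary left adjoint $\Free$ then carries that coequaliser into $\BA$ or $\CABA$. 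You also replace Proposition~\ref{prop:symmetric} by the observation that in a locally discrete $\Pos$-category the coinserter and the coequaliser of a pair have literally the same universal property. Both substitutions are legitimate: the discreteness of the homsets of $\BA$ (and of $\CABA\simeq\Set\op$) is exactly the convention fixed in Section~\ref{sec:BAvariety}. What the paper's detour through symmetry buys is twofold: it is insensitive to which partial order one places on the homsets of $\BA$, since for a symmetric pair antisymmetry alone collapses $h\pi_0\le h\pi_1$ to an equality; and the symmetry of the pair is needed again immediately in the proof of Theorem~\ref{thm:NbPoset}, where the coinserter must be transported further along $\Forg$ and $\Disc$ --- functors that preserve reflexive coequalisers and symmetry but not coinserters in general. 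So your argument establishes the proposition as stated, but you would still want the symmetry observation in hand to use it as the paper does.
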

\begin{proof}
Note first that $c$ coequalizes $\pi_0,\pi_1:\ForgPos X^\two\rightrightarrows \ForgPos X$; indeed two elements $x,y$ lie in the same connected component precisely when $x\leq y$ or $y\leq x$. It follows that $Fc$ coequalizes $\Free  \ForgPos X^\two \rightrightarrows \Free \ForgPos X$, and in particular $Fc\circ \pi_0\leq Fc\circ \pi_1$. We need to show that it is in fact a coinserter for which, due to Prop.\ref{prop:symmetric}, it is enough to show that it is a coequaliser.
Let $d:\Free\ForgPos X\to Y$ with $d\circ\Free\pi_0 = d\circ \Free\pi_1$.  Let $d':\ForgPos X\to \Forg Y$ be the adjoint transpose of $d$.  We have that $d'$ factors through $c$. Writing $\eta:\Id\to\Forg\Free$ for the unit of the adjunction, it follows that there is a unique $f:\Free\Con X\to Y$ such that $\Forg f\circ\eta_{CX}\circ c= d'$, or, equivalently, that  $f\circ\Free c=d$. We have shown that $\Free c$ is the coequaliser (and coinserter) of $\Free \ForgPos X^\two \rightrightarrows \Free \ForgPos X$.
\end{proof}

\begin{lemma} $\ForgBA$ and $\ForgCABA$ preserve reflexive coequalisers.
\end{lemma}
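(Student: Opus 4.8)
The two halves call for somewhat different tools, since $\BA$ is a finitary variety while $\CABA$ is not.

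For $\ForgBA$ this is the standard fact that the underlying-set functor of a finitary variety preserves reflexive coequalisers. Given a reflexive pair $f,g\colon A\rightrightarrows B$ in $\BA$ with common section $i$, the set $R=\{(fa,ga)\mid a\in A\}$ is the image of the homomorphism $\langle f,g\rangle\colon A\to B\times B$, hence a subalgebra of $B\times B$, and it is reflexive because $(fib,gib)=(b,b)$. As $\BA$ is congruence-permutable --- a Mal'cev term being $m(x,y,z)=(x\wedge\neg y)\vee(x\wedge z)\vee(\neg y\wedge z)$, for which $m(x,y,y)=x$ and $m(x,x,z)=z$ --- applying $m$ componentwise to the triples $(b,b),(a,b),(a,a)$ and $(a,b),(b,b),(b,c)$ shows that the reflexive compatible relation $R$ is automatically symmetric and transitive, hence a congruence $\theta$ on $B$. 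Then $B\to B/\theta$ is the coequaliser of $f,g$ in $\BA$; since $\theta$ equals $R$, which is already the equivalence relation on $\ForgBA B$ generated by $\{(fa,ga)\}$, passing to underlying sets yields precisely the coequaliser of $\ForgBA f,\ForgBA g$ in $\Set$. (One could instead invoke that $\BA\to\Set$ is an ordered variety, so its forgetful functor preserves reflexive coinserters, together with Proposition~\ref{prop:symmetric} and the remark identifying symmetric coinserters with coequalisers.)

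For $\ForgCABA$ the variety argument is not available, so the plan is to pass through the equivalence $\CABA\simeq\Set\op$ sending a complete atomic Boolean algebra to its set of atoms (with inverse $X\mapsto\cPow X$), under which $\ForgCABA$ becomes the contravariant powerset $\cPow\colon\Set\op\to\Set$, $X\mapsto\cPow X$, $f\mapsto f\inv$. A reflexive coequaliser in $\Set\op$ is, read back in $\Set$, a \emph{coreflexive equaliser}: a diagram $E\stackrel{e}{\hookrightarrow}B\rightrightarrows A$ whose parallel pair $u,v\colon B\to A$ admits a common retraction $t\colon A\to B$, $tu=tv=\id_B$. Then $u,v$ are injective, $E=\{b\in B\mid u(b)=v(b)\}$, and from $tu=tv=\id_B$ any equality $u(b)=v(b')$ forces $b=b'$, so the two copies of $B$ inside $A$ overlap exactly along $E$: $u[B]\cap v[B]=u[E]=v[E]$. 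It then remains to show that $\cPow$ turns this into a coequaliser in $\Set$, i.e.\ that $u\inv,v\inv\colon\cPow A\rightrightarrows\cPow B$ have coequaliser $(-)\cap E\colon\cPow B\to\cPow E$. This map is surjective and coequalises $u\inv,v\inv$ (for $b\in E$: $b\in u\inv T\iff u(b)\in T\iff v(b)\in T\iff b\in v\inv T$), so it suffices to show that every $S\subseteq B$ is identified with $S\cap E$ by the equivalence relation $\sim$ on $\cPow B$ generated by the pairs $(u\inv T,v\inv T)$, $T\subseteq A$.

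This last point is the crux, and the trick is to do it in a single $\sim$-step --- doing it point-by-point would fail, as $B\setminus E$ can be infinite: take the single set $T:=u[S]\cup v[S\cap E]$; using injectivity of $u,v$ and $u[B]\cap v[B]=u[E]=v[E]$ one checks $u\inv(T)=S$ and $v\inv(T)=S\cap E$, so $(S,S\cap E)$ is one of the generating pairs and hence $S\sim S\cap E$. Therefore $\ker\big((-)\cap E\big)={\sim}$, and a surjection with this kernel is the coequaliser; transporting back along $\CABA\simeq\Set\op$ shows $\ForgCABA$ preserves reflexive coequalisers. The main obstacle is exactly this $\CABA$ step --- recognising that the two embeddings of $B$ meet precisely in the image of the equaliser and using that to assemble $T$ in one stroke --- with everything else being routine bookkeeping. (Morally the $\BA$ and $\CABA$ cases are the same: the Mal'cev term above makes \emph{every} reflexive compatible relation on a Boolean algebra, complete or not, a congruence, so both halves follow at once once one observes that in $\CABA$ the quotient by a complete congruence carries the set-theoretic quotient as underlying set, which is again immediate from $\CABA\simeq\Set\op$.)
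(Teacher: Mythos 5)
Your proof is correct and follows the same two-part decomposition as the paper: for $\ForgBA$ the paper simply cites the fact that forgetful functors of finitary varieties preserve sifted colimits (hence reflexive coequalisers), and for $\ForgCABA$ it invokes the equivalence $\CABA\simeq\Set\op$ together with the cited lemma that $[-,2]:\Set\op\to\Set$ preserves reflexive coequalisers. What you do differently is open up both black boxes: your Mal'cev-term argument is a clean, self-contained substitute for the general sifted-colimits citation, and it actually yields slightly sharper information (the reflexive compatible relation $R$ is \emph{already} a congruence and an equivalence relation, so no transitive closure is needed and the underlying-set quotient is computed on the nose); your direct verification that the contravariant powerset sends coreflexive equalisers to coequalisers --- via the observation $u[B]\cap v[B]=u[E]=v[E]$ and the single witness $T=u[S]\cup v[S\cap E]$ --- is precisely a proof of the Barr--Wells lemma the paper cites. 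The paper's version is shorter because it delegates to the literature; yours is longer but elementary and verifiable line by line, and both are sound.
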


\begin{proof}
Being a variety of finitary algebras, $\ForgBA$ preserves sifted colimits and, in particular, reflexive coequalisers \cite{2010:adamekAlgebraic}. In the case of $\ForgCABA$ we use that $\CABA$ is equivalent to $\Set\op$ and that $[-,2]:\Set\op\to\Set$ preserves reflexive coequalisers, see \cite[5.1.5 Lemma]{1985:barrWellsTTT}.
\end{proof}

\begin{theorem}\label{thm:NbPoset}
The posetification of $\Nb_f$ is $\Disc\Nb_f\Con$ and the posetification of $\Nb$ is $\Disc\Nb\Con$.
\end{theorem}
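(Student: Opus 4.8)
The plan is to compute the coinserter \eqref{eq:posetification} that defines $\Nb_f'X$ (resp.\ $\Nb'X$) by transporting it across the adjunction $\FreeBA\dashv\ForgBA$ (resp.\ $\FreeCABA\dashv\ForgCABA$), using the natural isomorphism $\Nb_f\cong\ForgBA\FreeBA$ (resp.\ $\Nb\cong\ForgCABA\FreeCABA$). The two cases run identically --- $\CABA$ has discrete homsets just like $\BA$, and both Proposition~\ref{prop:symmetric} and Proposition~\ref{prop:FreeCoinserter} are available there (as is already relied upon in the proof of the latter) --- so I describe only the $\Nb_f$ case.

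First I would unfold the definition: $\Nb_f'X$ is the $\Pos$-enriched coinserter \eqref{eq:posetification} of the pair $\Disc\Nb_f\pi_0,\Disc\Nb_f\pi_1$, where $\pi_0,\pi_1\colon\ForgPos X^\two\rightrightarrows\ForgPos X$ is the nerve of $X$ from \eqref{eq:denspresD}, split by $i(x)=(x,x)$. Via $\Nb_f\cong\ForgBA\FreeBA$ this pair is $\Disc\ForgBA$ applied to $\FreeBA\pi_0,\FreeBA\pi_1\colon\FreeBA\ForgPos X^\two\rightrightarrows\FreeBA\ForgPos X$, a \emph{reflexive} pair in $\BA$ (split by $\FreeBA i$). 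Here comes the first key point: by Proposition~\ref{prop:symmetric} this reflexive pair in $\BA$ is also \emph{symmetric}, and applying the functor $\Disc\ForgBA$ (which preserves the equational data witnessing symmetry) shows $\Disc\Nb_f\pi_0,\Disc\Nb_f\pi_1$ is a symmetric pair in $\Pos$. Hence, by the remark following Proposition~\ref{prop:symmetric}, the coinserter map $e_X$ is at the same time the (conical) coequaliser of this pair; in particular its target $\Nb_f'X$ is the \emph{ordinary} coequaliser of $\Disc\Nb_f\pi_0,\Disc\Nb_f\pi_1$ in $\Pos$.

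Next I would compute that ordinary coequaliser directly. Proposition~\ref{prop:FreeCoinserter} identifies the coinserter of $\FreeBA\pi_0,\FreeBA\pi_1$ in $\BA$ as $\FreeBA c\colon\FreeBA\ForgPos X\to\FreeBA\Con X$, with $c\colon\ForgPos X\to\Con X$ the quotient onto connected components; since $\BA$ has discrete homsets this is also a coequaliser. As the pair is reflexive, the Lemma that $\ForgBA$ preserves reflexive coequalisers shows $\ForgBA\FreeBA c$ is the coequaliser in $\Set$ of $\ForgBA\FreeBA\pi_0,\ForgBA\FreeBA\pi_1$, i.e.\ (through $\Nb_f\cong\ForgBA\FreeBA$) $\Nb_f c\colon\Nb_f\ForgPos X\to\Nb_f\Con X$ is the coequaliser of $\Nb_f\pi_0,\Nb_f\pi_1$ in $\Set$; and $\Disc=\Disc_o$, being an ordinary left adjoint ($\Disc_o\dashv\ForgPos$, cf.\ \eqref{eq:CDV}), preserves it, so $\Disc\Nb_f c$ is the ordinary coequaliser of $\Disc\Nb_f\pi_0,\Disc\Nb_f\pi_1$ in $\Pos$. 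By uniqueness of coequalisers together with the previous paragraph, $\Nb_f'X\cong\Disc\Nb_f\Con X$. The isomorphism is the canonical comparison between the two coinserters, hence natural in $X$ (one checks $c\colon\ForgPos\Rightarrow\Con$ is natural, so the $\Disc\Nb_f c$ assemble into a transformation matching $e$), $\Disc\Nb_f\Con$ is visibly locally monotone (it factors through $\Set$), and since $\Con\Disc\cong\Id_{\Set}$ its restriction along $\Disc$ is $\Disc\Nb_f$ with $\alpha=\id$. So $\Disc\Nb_f\Con$ genuinely is the posetification $\Nb_f'$, and the same computation with $\FreeCABA\dashv\ForgCABA$ gives $\Nb'=\Disc\Nb\Con$.

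The step I expect to be the main obstacle, and the one that really uses the structure at hand, is the passage from the $\Pos$-enriched coinserter defining posetification to the set-level coequaliser computed by the adjunctions: a priori the enriched coinserter equips the quotient of $\Disc\Nb_f\ForgPos X$ with a possibly non-trivial order, and $\Disc$ is known \emph{not} to preserve coinserters in general. What rescues the computation is precisely the symmetry of the lifted relation $R_{\Nb_f}$, inherited from $\BA$ (resp.\ $\CABA$) via Proposition~\ref{prop:symmetric}, which collapses the generated order and turns the coinserter into an honest coequaliser, a colimit that $\Disc$ does preserve. A secondary care-point is bookkeeping of enriched versus ordinary (co)limits, but the underlying object of a conical enriched coequaliser is just the ordinary coequaliser, so this causes no trouble.
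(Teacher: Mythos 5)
Your proof is correct and follows essentially the same route as the paper's: identify $\Nb_f\cong\ForgBA\FreeBA$ (resp.\ $\Nb\cong\ForgCABA\FreeCABA$), use Proposition~\ref{prop:FreeCoinserter} to compute the coinserter/coequaliser in $\BA$ (resp.\ $\CABA$) as $\Free c$, and then transport it along $\Disc\Forg$ using preservation of reflexive coequalisers and of the symmetry coming from Proposition~\ref{prop:symmetric}. Your additional remarks on the naturality of the comparison and on $\Con\Disc\cong\Id$ only make explicit what the paper leaves implicit.
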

\begin{proof}
We use the notation of Proposition~\ref{prop:FreeCoinserter}. It follows from \eqref{eq:CDV} that $\Disc$ preserves all ordinary colimits and, in particular, reflexive coequalizers. Due to the lemma $\Forg$ preserves reflexive coequalizers. Like all functors,  $\Disc$ and $\Forg$ preserve the symmetry of coinserters. It follows from Proposition \ref{prop:FreeCoinserter} that
\[
\hfill
\xymatrix@C=14ex
{
\Disc\Forg\Free\ForgPos X^\two\ar@<-3pt>[r]_{\Disc\Forg\Free \pi_0}\ar@<3pt>[r]^{\Disc\Forg\Free\pi_1} & \Disc\Forg\Free\ForgPos X^\two\ar[r] & \Disc\Forg\Free\Con X
}
\hfill
\]
is a coequalizer and coinserter. Thus $\Disc\Forg\Free\Con$ is the posetification of $\Forg\Free$.
\end{proof}

\begin{rem}
This result is curious at first sight. Due to  \eqref{eq:CDV} and \eqref{eq:Kanalongadjoint}, $\Disc\Nb_f\Con$ and $\Disc\Nb\Con$ are also the \emph{right} Kan-extensions of $\Nb_f$ and $\Nb$, respectively. To better understand the situation let us recall that we need the posetification to be locally monotone, which means that it must be an \emph{enriched} left Kan extension. Now, working in the ordered setting (ie $\Pos$-enriched), Prop.\ref{prop:symmetric} enforces that Boolean algebras cannot be quotiented by a partial order in $\BA$ without quotienting by its symmetric closure. 

For example, let $X=\{p<q\}$, so that $\ForgPos X=\{p,q\}$ and $\ForgPos X^\two=\{(p,p),(p,q),(q,q)\}$. Then dividing $\Free\ForgPos X$ by $p\le q$ `equationally' gives the Boolean algebra $2^3$ whereas the coinserter gives $\Free 1=2^2$.
In more detail: Dividing $\Free\ForgPos X$ by $p\le q$ (or $p\wedge q=p$ or $p\wedge\neg q =0$) gives the Boolean algebra $2^3$ because $\Free\ForgPos X=2^4$ and $p\le q$ kills one of the 4 atoms, namely $p\wedge\neg q$. 
On the other hand, the coinserter divides $\Free\ForgPos X$ by a larger theory, namely by one in which negation is monotone. (Recall that in the $\Pos$-enriched setting all operations are monotone. Of course, one can still have algebras with ``non-monotone'' operations like negation in $\BA$, but then the $\Pos$-enriched order must be discrete. Which does not prevent us from recovering the natural order of $\BA$s by considering $\BA$ as a subcategory of $\DL$.)
\end{rem}

\section{Positivication}\label{sec:Positivication}


As mentioned in the introduction, a boolean coalgebraic logic is given by an endofunctor $T:\Set\to\Set$ determining the type of coalgebraic semantics, and an endofunctor $L:\BA\to\BA$ constructing general `modal algebras'. We have just seen how to extend a functor $T:\Set\to\Set$ to a functor $T':\Pos\to\Pos$ to extend a type of coalgebraic semantics to posets. Now we want to extend a boolean syntax-building functor to a positive syntax-building functor $L':\DL\to\DL$.

An obvious idea is to work dually to the posetification procedure and define the \emph{positivication} of $L:\BA\to\BA$ as the right Kan extension $L'=\Ran_\ForgBADL(\ForgBADL L) $,
where $\ForgBADL$ is the inclusion $\BA\to\DL$.%
\footnote{$\BADL$ is fully faithful. Moreover, whereas $B\in\BA$ is discrete (see Section~\ref{sec:BAvariety}),  $\BADL B\in\DL$ is equipped with its natural order. So while $\BADL$ `forgets negation' it also `adds the order'.}
Note that this gives us what we would expect: (i) an isomorphism $\beta: L'\BADL\cong \BADL L$, saying that $L'$ is the same as $L$ on boolean algebras and (ii) for all $H:\DL\to\DL$ and all $\alpha:H\BADL\to \BADL L$ there is a unique $\gamma:H\to L'$ such that $\beta\circ \gamma = \alpha$, saying that $L'$ is the optimal (or co-universal) extension with (i).

It is also worth emphasising that $\beta: L'\BADL\cong \BADL L$ will be doing some real work once the abstract framework is instantiated with concrete examples. In particular, $\beta^{-1}$ will translate a boolean formula $\phi$ in $LB$ into a positive formula $\beta_B^{-1}(\phi)$ where negation is eliminated from the modal part and pushed ``onto the atoms in $B$''.

In order to capture this process of eliminating negation in the abstract categorical framework, we need to understand, once again, the Kan extension  in the $\Pos$-enriched way. To compute these right Kan extensions we use a presentation of distributive lattices which will play the same role in the computation of positivications as  (\ref{eq:leftKanalongD}) played in the case of posetifications.

\begin{proposition}\label{prop:DLinserter}
Every $A\in\DL$ is the inserter of a diagram of boolean algebras (where $in_1,in_2$ are the canonical embeddings and $e$ is the unit at $A$ of the adjunction $\FreeBADL\dashv \BADL_o$):
\begin{equation}\label{eq:DLinserter}
\hfill
\xymatrix@C=10ex
{
A\ar[r]^e & \ForgBADL \FreeBADL A\ar@<6pt>[r]^-{\ForgBADL \FreeBADL in_2} \ar@<-6pt>[r]_-{\ForgBADL \FreeBADL in_1}^-{\uparrow} & \ForgBADL \FreeBADL (\two\cop A )
}
\hfill
\end{equation}
\end{proposition}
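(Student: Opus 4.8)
The claim is that every distributive lattice $A$ arises as the inserter (in $\DL$) of the parallel pair $\ForgBADL\FreeBADL\, in_1, \ForgBADL\FreeBADL\, in_2 : \ForgBADL\FreeBADL A \rightrightarrows \ForgBADL\FreeBADL(\two\cop A)$, with the comparison map being the unit $e$. Recall from the excerpt that in $\DL$ (as in any category of ordered algebras) the inserter of $f,g : B \rightrightarrows C$ is the sub-$\DL$ $\{b \in B \mid f(b) \le g(b)\}$ of $B$. So I must identify this subalgebra of $\ForgBADL\FreeBADL A$ with (the image of) $A$ under the unit $e : A \to \ForgBADL\FreeBADL A$. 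The plan is: (1) show $e$ equalises — in fact \emph{inserts} — the pair, i.e. $\ForgBADL\FreeBADL(in_1)\circ e \le \ForgBADL\FreeBADL(in_2)\circ e$; (2) show $e$ is the universal such map, i.e. any $\DL$-morphism $h : D \to \ForgBADL\FreeBADL A$ with $\ForgBADL\FreeBADL(in_1)\circ h \le \ForgBADL\FreeBADL(in_2)\circ h$ factors uniquely through $e$; then conclude $A \cong$ inserter. Uniqueness of the factorisation is immediate once we know $e$ is (split) monic, which it is because $A\in\DL$ is a retract of a free Boolean algebra over it in the appropriate sense — more simply, $e$ is injective since a distributive lattice embeds into the free Boolean algebra it generates (this is the standard fact that $\FreeBADL$ is faithful on objects via $e$, equivalently $\BADL_o$ is full so the unit is a split mono in $\Set$).

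**The computational core.**

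The heart of the matter is step (1) combined with the description of the inserter. Here $\two\cop A$ is the tensor (copower) of $A$ with the two-chain $\two = \{0<1\}$, described in the excerpt as ``an ordered coproduct of $A$ with itself in which each $a$ on the left is smaller than the $a$ on the right''; write $in_1, in_2 : A \to \two\cop A$ for the two coprojections, so that $in_1(a) \le in_2(a)$ for all $a\in A$ in $\two\cop A$, and $\two\cop A$ is generated as a $\DL$ by $in_1[A]\cup in_2[A]$ subject only to those inequations (plus the lattice laws). Applying the free-Boolean-algebra functor $\FreeBADL$ and forgetting, $\ForgBADL\FreeBADL(\two\cop A)$ is the free BA on the generators $in_1(a), in_2(a)$ ($a\in A$) modulo the $\DL$-relations of $\two\cop A$; crucially, since we are now in $\BA$ (discrete homsets, Prop.~\ref{prop:symmetric}), the inequations $in_1(a)\le in_2(a)$ do \emph{not} collapse to equations — they persist as genuine order in the image of $\BADL$ inside $\DL$. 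Then $\ForgBADL\FreeBADL(in_j)\circ e$ sends a generator $a\in A$ to $in_j(a)$, and $in_1(a)\le in_2(a)$ gives step (1). For step (2): given $h : D \to \ForgBADL\FreeBADL A$ with $\ForgBADL\FreeBADL(in_1)\circ h \le \ForgBADL\FreeBADL(in_2)\circ h$, I must show $h$ lands inside $e[A]$. Transpose across $\FreeBADL \dashv \BADL_o$: a $\DL$-map into $\ForgBADL\FreeBADL A$ is no constraint by itself, so instead I use the universal property of $\two\cop A$ directly — the condition on $h$ says precisely that the assignment $a \mapsto h(\text{image of }a)$ respects the defining inequations of $\two\cop A$, which by a diagram chase forces the relevant elements of $\ForgBADL\FreeBADL A$ to be fixed by the (idempotent, order-theoretic) operator whose image is exactly $e[A]$.

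**Where the difficulty lies, and an alternative route.**

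The main obstacle is making precise \emph{why} the inserter subalgebra $\{t \in \ForgBADL\FreeBADL A \mid \ForgBADL\FreeBADL(in_1) t \le \ForgBADL\FreeBADL(in_2) t\}$ is exactly $e[A]$ and not something larger: a priori it could contain Boolean combinations of generators that happen to satisfy the inequation. The clean way to see this is an explicit characterisation of $\ForgBADL\FreeBADL(\two\cop A)$: every element of the free BA on $in_1[A]\sqcup in_2[A]/(\text{order})$ can be put in a normal form, and one checks that a Boolean term $t(a_1,\dots,a_n)$ over $A$ satisfies $in_1(t) \le in_2(t)$ in $\two\cop A$ (equivalently in its Booleanisation) if and only if $t$ is, modulo the theory of $A$, a \emph{lattice} term, i.e. negation-free — this is the order-theoretic shadow of the fact that negation is the unique non-monotone Boolean operation, and monotone terms are exactly the lattice terms. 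That ``if and only if'' is the one genuinely content-bearing computation; it is essentially the statement that the positive (negation-free) fragment of the free BA on a poset is carved out precisely by preservation of order, which is the whole point of working $\Pos$-enriched. An alternative, more abstract route avoids normal forms: use that $\BADL$ is a \emph{fully faithful right adjoint} (so $\DL$ is a reflective-like enriched subcategory situation relative to $\FreeBADL$), that the unit $e$ is a regular mono in $\DL$, and invoke the general fact that for a monadic-type situation the object is the (co)inserter/inserter of the (co)free resolution's first two terms — here the enriched analogue of the fact that every algebra is the coequaliser of the first step of its bar resolution, with the coinserter/inserter swap forced by Prop.~\ref{prop:symmetric} exactly as in the Remark following Theorem~\ref{thm:NbPoset}. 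I would present the concrete normal-form argument for readability, flagging the abstract one as the conceptual reason.
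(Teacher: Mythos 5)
Your reduction of the proposition to a concrete claim is the right shape — since inserters in $\DL$ are computed as $\{b\in B\mid fb\le gb\}$, and $e$ is an injective lattice homomorphism (hence an order-embedding), everything does come down to showing that the sublattice $I=\{t\in \ForgBADL\FreeBADL A\mid \ForgBADL\FreeBADL(in_1)(t)\le \ForgBADL\FreeBADL(in_2)(t)\}$ is exactly $e[A]$ and not larger. But that is precisely the step you do not prove. You name it yourself as ``the one genuinely content-bearing computation'' and then substitute two gestures for it: (a) an appeal to ``monotone Boolean terms are exactly the lattice terms'', and (b) ``an idempotent order-theoretic operator whose image is exactly $e[A]$''. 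Neither closes the gap. Claim (a) is a fact about Boolean \emph{functions} $2^n\to 2$; what you need is the analogous statement for the free Boolean extension of an arbitrary distributive lattice $A$, evaluated at the \emph{generic} inequality $in_1(a)\le in_2(a)$ inside $\ForgBADL\FreeBADL(\two\cop A)$. That requires an explicit description of $\ForgBADL\FreeBADL(\two\cop A)$ and of when $\bigvee_i(in_1(a_i)\wedge\neg in_1(b_i))\le\bigvee_j(in_2(c_j)\wedge\neg in_2(d_j))$ holds there — for instance via the normal form $\bigvee_i(a_i\wedge\neg b_i)$ for elements of a free Boolean extension and the characterisation of the order between such normal forms in terms of $A$. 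You never supply this, and no candidate for the operator in (b) is exhibited. As it stands the proof establishes only $e[A]\subseteq I$, which is the easy inclusion.

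The paper avoids this computation entirely by Priestley duality: applying the prime-filter functor to \eqref{eq:DLinserter} and using that $\two\cop A$ is dual to $(A^\partial)^\two$ (Remark~\ref{rem:DLcoinserter}(\ref{rem:DLcoinserter:duality})) turns the diagram into the reflexive coinserter \eqref{eq:denspresD} presenting the poset $A^\partial$, after which one only checks that this coinserter also holds in Priestley spaces. Under that dictionary your missing step becomes the one-line observation that a clopen subset of $\ForgPos(A^\partial)$ inserting $\pi_0\le\pi_1$ is an up-set, hence a clopen up-set, hence an element of $A$ — so your strategy is salvageable, but the cheapest way to salvage it is exactly the duality argument. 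Finally, be careful with your proposed ``abstract alternative'': \eqref{eq:DLinserter} is not the (co)bar resolution of the monad $\ForgBADL_o\FreeBADL$ (that would involve $\FreeBADL\ForgBADL\FreeBADL A$ rather than $\FreeBADL(\two\cop A)$), and $A$ is not an algebra for that monad unless it is already Boolean, so the ``every algebra is a coequaliser of its free resolution'' template does not apply as stated.
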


\begin{rem}\label{rem:DLcoinserter}
\begin{enumerate}
\item  The tensor $\two\bullet A$ is isomorphic to $A+A$ modulo inequations $in_1 a\le in_2 a$. If $a\in A$ has a complement then $in_1 a= in_2 a$. If all elements of $A$ are complemented, that is, if  the distributive lattice $A$ happens to be a boolean algebra, then $\two\bullet A\cong A$.
\item
Equivalently, $\two\bullet A$ can be represented as the distributive lattice generated by $\{\Box_1 a\mid a\in A\}$ and $\{\Box_2 a\mid a\in A\}$ modulo equations specifying that $\Box_1, \Box_2$ preserve all $\DL$-operations and modulo inequations $\Box_1 a\le \Box_2 a$.
\item \label{rem:DLcoinserter:duality} Let $A^\partial$ be the Priestley space dual to $A\in\DL$, that is, the space of prime filters on $A$. Then $2\bullet A$ is dual to $(A^\partial)^\two$. \footnote{Cotensors in Priestley spaces are computed as cotensors in $\Pos$, since the forgetful functor from Priestely spaces to posets preserves and creates all $\Pos$-enriched limits.}
\item The inserters \eqref{eq:DLinserter} are reflexive. This follows easily from the definition of tensor with $\two$ as ${\cal A}(\two\bullet A,A')\cong[\two,{\cal A}(A,A')]$ giving us a half-inverse $\two\bullet A\to A$ of both $in_{1},in_{2}:A\to\two\bullet A$ as the transpose of the map $\two\to{\cal A}(A,A)$ which maps both truth values to $\id_A$.
\end{enumerate}
\end{rem}
\newcommand{\cPosDL}{\Up}
\newcommand{\cDLPos}{\Pf}
\newcommand{\cBASet}{\Uf}
\newcommand{\cSetBA}{\Pow}
\newcommand{\SetBA}{\Disc}
\newcommand{\SetPos}{\Disc}
\newcommand{\PosSet}{\ForgPos}
\begin{proof}[Proof of Prop.\ref{prop:DLinserter}]
Let $(-)^\partial$ be the functor that dualises $\DL$s to Priestley spaces. Its composition with the forgetful functor to posets we denote by  $\cDLPos:\DL\to\Pos$. Applying it to \eqref{eq:DLinserter} yields  a reflexive coinserter in $\Pos$
\begin{equation}\label{eq:DLinserter2}
\hfill
\xymatrix@C=10ex
{
\cDLPos A\ar@{<-}[r] & \SetPos \PosSet (\cDLPos A)\ar@{<-}@<6pt>[r]^-{} \ar@{<-}@<-6pt>[r]_-{}^-{\uparrow} & \SetPos \PosSet ( \cDLPos A )^\two
}
\hfill
\end{equation}
as in \eqref{eq:denspresD}.
 Now it only remains to check that it is also a coinserter in Priestley spaces, from which the result follows by duality.
\end{proof}

The following theorem requires some knowledge of enriched category theory. Even though the theorem is one of the main contributions, we encourage the reader so inclined to skip directly ahead to its corollary, which is all that is needed to follow the rest of the paper.

\begin{theorem}\label{thm:BADLdense}
$\BADL:\BA\to\DL$ is co-dense and the inserters \eqref{eq:DLinserter} form a co-density presentation in the sense of \cite[Thm 5.19]{1982:kelly}.
\end{theorem}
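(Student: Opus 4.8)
The plan is to establish co-density and the co-density presentation dually, by transporting the known density facts about $\Disc:\Set\to\Pos$ through the Priestley and Stone dualities. Recall that co-density of $\BADL$ means that the restriction functor $[\BADL,\Id_{\DL}]:[\DL,\DL]\to[\BA,\DL]$ (taking limits in the appropriate enriched sense, using the contravariant hom) is fully faithful, i.e. that $\BADL\op:\BA\op\to\DL\op$ is dense; equivalently, that every $A\in\DL$ is canonically a weighted limit of boolean algebras, exhibited by the inserters \eqref{eq:DLinserter}. The key structural input is that Priestley duality $(-)^\partial:\DL\to\mathbf{Pries}\op$ and Stone duality $(-)^\partial:\BA\to\mathbf{Stone}\op$ are equivalences which are compatible with $\BADL$: under them, $\BADL$ corresponds to the inclusion $\mathbf{Stone}\to\mathbf{Pries}$ of Stone spaces (discretely ordered Priestley spaces) into Priestley spaces. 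Moreover, since the forgetful functor $\mathbf{Pries}\to\Pos$ preserves and creates all $\Pos$-enriched limits (as noted in Remark~\ref{rem:DLcoinserter}), and since the underlying-set/discrete adjunction restricts appropriately, the inclusion $\mathbf{Stone}\to\mathbf{Pries}$ "is" the restriction of $\Disc:\Set\to\Pos$ to these subcategories.

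The steps, in order. First I would make precise the claim in the proof of Prop.~\ref{prop:DLinserter}: applying $(-)^\partial$ to \eqref{eq:DLinserter} yields, in $\mathbf{Pries}$, exactly the diagram
$$
\xymatrix@C=10ex
{
(\cDLPos A)^\two\ar@<6pt>[r]\ar@<-6pt>[r] & \cDLPos A \ar[r] & A^\partial
}
$$
where $\cDLPos A$ denotes the underlying Priestley space of $A$ viewed via its free boolean algebra $\FreeBADL A$ — that is, $\FreeBADL A$ is dual to the underlying Stone space of the Priestley space $A^\partial$, and $\two\bullet A$ is dual to $(A^\partial)^\two$ by Remark~\ref{rem:DLcoinserter}(\ref{rem:DLcoinserter:duality}). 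Second, I would invoke the fact from \cite{2013:PositiveFragments} (recalled around \eqref{eq:denspresD}) that $\Disc:\Set\to\Pos$ is dense with density presentation given by the reflexive coinserters \eqref{eq:denspresD}, i.e. every poset is canonically the reflexive coinserter of the nerve diagram $X^\two\rightrightarrows X$. Third, I would check that this density presentation restricts to $\mathbf{Stone}\hookrightarrow\mathbf{Pries}$: the diagram \eqref{eq:denspresD} for a Priestley space $X$ lives in $\mathbf{Pries}$ (the nerve $X^\two$ and $X$ carry natural Priestley topologies, the maps are continuous monotone), the space $X$ with its discretely-ordered underlying Stone space plays the role of $\Disc\ForgPos X$, and the coinserter computed in $\Pos$ agrees with the one computed in $\mathbf{Pries}$ because the forgetful functor $\mathbf{Pries}\to\Pos$ creates $\Pos$-enriched colimits of this shape — or, more carefully, because both are reflexive and the Priestley topology on the coinserter is the unique one making it Priestley. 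Fourth, dualising back along Priestley/Stone duality turns "$\Disc:\mathbf{Set}\to\mathbf{Pos}$ is dense with the nerve presentation" into "$\BADL:\BA\to\DL$ is co-dense with co-density presentation \eqref{eq:DLinserter}", where co-limits become limits, coinserters become inserters, and reflexivity is preserved (Remark~\ref{rem:DLcoinserter}(4)). Finally I would verify the bookkeeping of \cite[Thm 5.19]{1982:kelly}: that the cylinder/weight data for the inserter presentation on the $\DL$ side matches, term by term, the (dualised) weight data for the coinserter presentation on the $\Pos$ side, so that "density presentation" transports to "co-density presentation".

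The main obstacle I expect is the third step: rigorously checking that the density presentation of $\Disc:\Set\to\Pos$ restricts to the inclusion $\mathbf{Stone}\hookrightarrow\mathbf{Pries}$ as an honest $\Pos$-enriched (co)density presentation, rather than merely pointwise. One must confirm that the reflexive coinserter $X^\two\rightrightarrows X\to X$ computed in $\mathbf{Pries}$ really is a coinserter there (topology included), and that the subcategory inclusion interacts correctly with the weights; the topological side has no analogue in the purely order-theoretic argument of \cite{2013:PositiveFragments}, so this is where genuinely new verification is needed. An alternative, which sidesteps some topology, is to argue directly: show $[\BADL,\Id_{\DL}]:[\DL,\DL]\to[\BA,\DL]$ is fully faithful by using Prop.~\ref{prop:DLinserter} to write any $A$ as the stated inserter, observing that any $\Pos$-enriched $H:\DL\to\DL$ sends inserters to inserters only if it is a right Kan extension — but in fact one shows $H A$ is \emph{recovered} as the inserter of $H\BADL\FreeBADL A\rightrightarrows H\BADL\FreeBADL(\two\bullet A)$ whenever the comparison is to be natural — and then naturality of transformations is pinned down on the generating boolean subalgebras. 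Either way the enriched (as opposed to ordinary) nature of the inserters — carrying the inequality $in_1 a\le in_2 a$ — is exactly what makes the boolean algebras $\FreeBADL A$ and $\FreeBADL(\two\bullet A)$ suffice, and this is the point where Prop.~\ref{prop:symmetric} is implicitly used: in the discretely-ordered (purely boolean) world these inserters would collapse, which is why co-density genuinely requires the $\Pos$-enrichment.
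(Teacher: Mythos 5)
Your proposal is correct and follows essentially the same route as the paper: the paper's proof of Proposition~\ref{prop:DLinserter} applies Priestley duality to reduce the inserters \eqref{eq:DLinserter} to the reflexive coinserter presentation \eqref{eq:denspresD} of $\Disc:\Set\to\Pos$, and Theorem~\ref{thm:BADLdense} (for which the paper gives no further separate argument) is exactly the dualisation of that density presentation which you describe. The topological verification you flag as the main obstacle is precisely the step the paper itself leaves open with ``it only remains to check that it is also a coinserter in Priestley spaces,'' so your write-up supplies rather than misses the relevant detail.
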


Together with \cite[Thm 5.30]{1982:kelly} we obtain

\begin{corollary}\label{cor:positivication}
Every $\BA$-endofunctor $L$ has a positivication $L'$ which can be computed explicitly at any distributive lattice $A$ as the inserter
\begin{equation}\label{eq:PositivicationIns}
\hfill
\xymatrix@C=10ex
{
L' A\ar[r] & \ForgBADL L \FreeBADL A\ar@<6pt>[r]^-{\ForgBADL L \FreeBADL in_2} \ar@<-6pt>[r]_-{\ForgBADL L \FreeBADL in_1}^-{\uparrow} & \ForgBADL L \FreeBADL (\two\cop A )
}
\hfill
\end{equation}
Moreover, a functor is a positivication iff it preserves Boolean algebras and the inserters \eqref{eq:DLinserter}.
\end{corollary}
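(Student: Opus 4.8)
The plan is to derive the corollary from Theorem~\ref{thm:BADLdense} by a straightforward application of the general enriched-category machinery, and then to verify the ``iff'' characterisation separately. Recall that Theorem~\ref{thm:BADLdense} states that $\BADL:\BA\to\DL$ is codense and that the inserters \eqref{eq:DLinserter} form a codensity presentation. Dualising \cite[Thm 5.30]{1982:kelly} (the dual of the density presentation theorem for Kan extensions), codensity of $\BADL$ together with the codensity presentation tells us two things at once: first, that the right Kan extension $L'=\Ran_{\BADL}(\BADL L)$ exists and satisfies $L'\BADL\cong\BADL L$ (so that $L'$ genuinely extends $L$ and is the co-universal such extension, as already noted in the text preceding the theorem); and second, that $L'$ can be computed pointwise as the limit of the diagram obtained by applying $\BADL L\FreeBADL$ to the codensity presentation of $A$. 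Since the codensity presentation of $A$ is precisely the reflexive inserter \eqref{eq:DLinserter}, applying the limit-preserving functor to this presentation yields exactly the inserter \eqref{eq:PositivicationIns}. This is the content of the first half of the corollary.

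The one technical point to be careful about here is why applying $\BADL L\FreeBADL$ to the inserter \eqref{eq:DLinserter} produces the correct formula for the Kan extension. In the enriched setting, a codensity presentation guarantees that $(\Ran_{\BADL}F)A$ is the weighted limit of $F$ weighted by the presentation, and for a presentation given by a diagram whose apex is $A$ expressed as an inserter over $\BADL$-images, this weighted limit reduces to the ordinary inserter of the diagram obtained by applying $F\FreeBADL$ followed by the appropriate transpositions. Concretely, the universal property of \eqref{eq:PositivicationIns} should be spelled out: a morphism $B\to L'A$ in $\DL$ is the same as a morphism $B\to\BADL L\FreeBADL A$ equalising (in the ordered sense, i.e.\ inserting) the two composites to $\BADL L\FreeBADL(\two\bullet A)$, which by adjointness $\FreeBADL\dashv\BADL_o$ and the universal property of $\two\bullet A$ as a tensor unwinds to a $\DL$-morphism $B\to L'A$; one checks this matches the hom-set $\DL(B,\Ran_{\BADL}(\BADL L)\,A)$ computed via the end formula. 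I expect this verification to be the main obstacle, since it requires being genuinely careful about the interaction between the tensor $\two\bullet A$, the adjunction $\FreeBADL\dashv\BADL_o$, and the enriched end defining the Kan extension; everything else is bookkeeping.

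For the ``moreover'' clause, the plan is to argue both directions. If $L'$ is a positivication, i.e.\ $L'=\Ran_{\BADL}(\BADL L)$ for some $\BA$-endofunctor $L$, then $L'\BADL\cong\BADL L$ preserves Boolean algebras (in the sense that $L'$ restricted along $\BADL$ lands in the image of $\BADL$, up to the natural iso $\beta$), and the pointwise formula \eqref{eq:PositivicationIns} together with the fact that \eqref{eq:DLinserter} is sent to \eqref{eq:PositivicationIns} shows $L'$ preserves the inserters \eqref{eq:DLinserter}. Conversely, suppose $H:\DL\to\DL$ preserves Boolean algebras and the inserters \eqref{eq:DLinserter}. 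Preserving Boolean algebras means $H\BADL\cong\BADL L$ for some $L:\BA\to\BA$ (defining $L$ as the restriction of $H$ along $\BADL$, using that $\BADL$ is fully faithful to land the codomain back in $\BA$). Preserving the inserter \eqref{eq:DLinserter} then means that for every $A$, $HA$ is the inserter of $H$ applied to \eqref{eq:DLinserter}; since $\two\bullet A$ and $\FreeBADL A$ are Boolean algebras, $H$ on them is $\BADL L$ on them, so $HA$ is the inserter \eqref{eq:PositivicationIns}, which by the first part of the corollary is exactly $L'A$. Hence $H\cong L'=\Ran_{\BADL}(\BADL L)$, so $H$ is a positivication. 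One should remark that all of this respects the $\Pos$-enrichment: the functors and natural transformations involved are locally monotone because they are built from $\Pos$-enriched adjunctions and $\Pos$-enriched (co)limits, so no separate monotonicity check is needed beyond invoking the enriched versions of the cited theorems.
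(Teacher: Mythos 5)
Your proposal is correct and takes essentially the same route as the paper, which obtains the corollary directly from Theorem~\ref{thm:BADLdense} together with \cite[Thm 5.30]{1982:kelly}; your spelled-out verification of the pointwise inserter formula and of the \emph{moreover} clause is precisely what that citation packages up. (One cosmetic slip in your converse direction: it is $\FreeBADL A$ and $\FreeBADL(\two\cop A)$, not $\two\cop A$ itself, that are Boolean algebras, which is what lets you replace $H$ by $\BADL L$ on the inserter diagram.)
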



\begin{proposition}\label{prop:finitarypositivication}
If $L:\BA\to\BA$ is finitary, then so is its positivication $L':\DL\to\DL$.
\end{proposition}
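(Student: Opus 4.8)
The plan is to use the explicit inserter formula for the positivication from Corollary~\ref{cor:positivication}, and to combine the known fact that $\BADL$-type presentations of distributive lattices are built from \emph{finite} data with the fact that finitary functors preserve filtered colimits. Recall that a functor is finitary iff it preserves filtered colimits, equivalently (for functors into a locally finitely presentable category) iff it is determined by its action on finitely presentable objects. So what I want to show is that $L'$, given at $A\in\DL$ by the inserter
\[
\xymatrix@C=10ex
{
L' A\ar[r] & \ForgBADL L \FreeBADL A\ar@<6pt>[r]^-{\ForgBADL L \FreeBADL in_2} \ar@<-6pt>[r]_-{\ForgBADL L \FreeBADL in_1}^-{\uparrow} & \ForgBADL L \FreeBADL (\two\cop A )
},
\]
preserves filtered colimits whenever $L$ does.

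The key steps, in order. First I would observe that each of the building blocks of the diagram above is finitary as a functor $\DL\to\DL$ (or $\DL\to\BA\to\DL$): the inclusion $\ForgBADL:\BA\to\DL$ is finitary (it preserves all colimits, being a left adjoint — indeed $\ForgBADL_o$ has the left adjoint $\FreeDLBA$, but more to the point $\BA\to\DL$ preserves filtered colimits since finitely presentable Boolean algebras go to finitely presentable distributive lattices and directed unions are computed on carriers); the free functor $\FreeBADL:\DL\to\BA$ is a left adjoint hence preserves \emph{all} colimits, in particular filtered ones; the copower functor $A\mapsto\two\cop A$ is finitary because $\two$ is a finite poset and copowers with a fixed finitely presentable weight preserve filtered colimits (equivalently, $\two\cop A\cong A+A$ quotiented by the finitely many relations $in_1a\le in_2 a$, a finitary construction, cf. Remark~\ref{rem:DLcoinserter}); and $L$ is finitary by hypothesis. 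Composing finitary functors yields a finitary functor, so both parallel legs $\ForgBADL L\FreeBADL$ and $\ForgBADL L\FreeBADL(\two\cop-)$ of the inserter diagram are finitary endofunctors of $\DL$, as is the leg-pair $(\ForgBADL L\FreeBADL in_1,\ForgBADL L\FreeBADL in_2)$ viewed as a natural transformation between them.

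Second, I would invoke the fact that \emph{finite weighted limits commute with filtered colimits} in a locally finitely presentable $\Pos$-enriched category. The inserter of a parallel pair is a finite weighted limit (the weight being the finite data of two parallel arrows with an inequality). Since $\DL$ is locally finitely presentable and the inserter is a finite weighted limit, and both legs preserve filtered colimits, the pointwise formula $L'=\mathrm{Ins}(\ldots)$ exhibits $L'$ as a finite weighted limit of finitary functors, hence $L'$ itself preserves filtered colimits, i.e. is finitary. Concretely one can also argue by hand: for a filtered diagram $(A_i)$ with colimit $A$, an element of $L'A$ is an element $t$ of $\ForgBADL L\FreeBADL A$ satisfying $\ForgBADL L\FreeBADL in_1(t)\le\ForgBADL L\FreeBADL in_2(t)$; since $\ForgBADL L\FreeBADL$ is finitary, $t$ and the witness of the inequality already live over some $A_i$, and any identification of two such comes from some stage, which is exactly the statement that $\colim_i L'A_i\to L'A$ is an isomorphism.

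The main obstacle I expect is the handling of the $\Pos$-enrichment in the commutation-of-limits-with-colimits argument: one must make sure that the relevant result (finite weighted limits commute with filtered colimits in locally finitely presentable $\cal V$-categories, for $\cal V=\Pos$) is available in the form needed, and in particular that ``filtered colimit'' is interpreted correctly in the enriched sense — here it is harmless because filtered colimits in $\DL$ are the ordinary ones (they are computed on underlying sets with the order of the union), so the enriched and ordinary notions coincide and the $\uparrow$ in the inserter only affects which elements survive, not how the colimit is formed. Once that point is made, the argument is essentially bookkeeping: check each constituent functor is finitary, then apply the commutation result, or equivalently run the elementwise argument of the previous paragraph.
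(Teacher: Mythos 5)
Your proposal is correct and follows essentially the same route as the paper's proof: observe that each constituent of the inserter presentation \eqref{eq:PositivicationIns} --- $\ForgBADL$, $L$, $\FreeBADL$, $\two\cop-$ --- preserves filtered colimits, and then invoke the commutation of filtered colimits with finite weighted limits (the paper cites Kelly, \emph{Structures defined by finite limits in the enriched context}, Prop.~4.9, for exactly this point). The extra elementwise verification you sketch is a sound, if redundant, concretisation of the same argument.
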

\begin{proof}
All  operations involved in \eqref{eq:PositivicationIns}, that is, $\BADL$, $L$, $\FreeDLBA$, $\two\bullet-$, preserve filtered colimits. And filtered colimits commute with finite weighted limits, see \cite[Prop.4.9]{Kelly:finitelimits}.\footnote{We are grateful to John Power for pointing out this reference.}
\end{proof}

\subsection{Positivication of normal modal logic}

Of course, the positivication of Kripke's normal modal logic with one meet-preserving $\Box$ should turn out to be Dunn's positive modal logic \cite{1995:DunnPML}. We show this in a roundabout way which has the advantage of making precise the relationship of our notion of positivication with the procedure employed in \cite{2013:PositiveFragments}.

To summarise, going back to Diagrams \eqref{diag:boolean} and \eqref{diag:positive}, \cite{2013:PositiveFragments} starts with $T$ and then, on the one hand define $L$ via $LB=\Pow T\Uf B$ on finite $\BA$s and, on the other hand, define $L'$ via $L'A=\Up T'\Pf A$ on finite $\DL$s with $T'$ the posetification of $T$. 

\begin{theorem}\label{thm:positivication}
Let $T:\Set\to\Set$ preserve finite sets and let $L$ be given on finite $B\in\BA$ by $LB=\Pow T\Uf B$. Let $T'$ be the posetification of $T$ and let $L'$ be given on finite $A\in\DL$ by $L'A=\Up T'\Pf A$. Then $L'$ is the positivication of $L$.
\end{theorem}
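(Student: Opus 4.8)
The plan is to use the characterisation in Corollary~\ref{cor:positivication}: a functor $L':\DL\to\DL$ is \emph{the} positivication of $L:\BA\to\BA$ if and only if it preserves Boolean algebras (i.e. there is an iso $L'\ForgBADL\cong\ForgBADL L$) and preserves the inserters~\eqref{eq:DLinserter}. So rather than computing the inserter~\eqref{eq:PositivicationIns} directly, I would verify these two closure properties for the candidate $L'A=\Up T'\Pf A$ (defined on finite $\DL$s, then Kan-extended, or equivalently checked on finitely presented objects since everything in sight is finitary by Proposition~\ref{prop:finitarypositivication} once $T$ preserves finite sets). The reason this route is attractive is that both conditions become statements about the \emph{model side}, where we have concrete descriptions: $\Pf$ is (Priestley/finite) duality sending a finite $\DL$ to its poset of prime filters, $\Up$ its adjoint, and $T'$ the posetification, about which Section~\ref{sec:Posetification} gives us a working handle.

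First I would check that $L'$ preserves Boolean algebras. For finite $B\in\BA$, $\Pf(\ForgBADL B)$ is just the discrete poset $\Uf B$ of ultrafilters (prime filters on a Boolean algebra are ultrafilters, and on a finite BA the specialization order is trivial); in enriched terms $\Pf\ForgBADL\cong\Disc\Uf$ on finite structures. Dually $\Up\Disc\cong\ForgBADL\Pow$ since upsets of a discrete poset are all subsets. Now $T'\Disc\cong\Disc T$ by the defining natural isomorphism $\alpha$ of the posetification. Chaining these: $L'\ForgBADL B=\Up T'\Pf\ForgBADL B\cong\Up T'\Disc\Uf B\cong\Up\Disc T\Uf B\cong\ForgBADL\Pow T\Uf B=\ForgBADL LB$, which is exactly the hypothesis $LB=\Pow T\Uf B$. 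Naturality of all four isomorphisms gives the required natural iso $\beta:L'\ForgBADL\cong\ForgBADL L$.

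Second, and this is where the real content lies, I would show $L'$ preserves the inserters~\eqref{eq:DLinserter}. By Remark~\ref{rem:DLcoinserter}(\ref{rem:DLcoinserter:duality}), applying $\Pf$ (i.e. $(-)^\partial$ followed by the forgetful functor to $\Pos$) to~\eqref{eq:DLinserter} turns the reflexive inserter in $\DL$ into the reflexive coinserter~\eqref{eq:denspresD} presenting the poset $\Pf A$, namely $\Disc\ForgPos(\Pf A)^\two\rightrightarrows\Disc\ForgPos(\Pf A)\to\Pf A$; concretely $\two\bullet A$ is dual to $(\Pf A)^\two$ and $\ForgBADL\FreeBADL A$ is dual to the discretisation of $\Pf A$. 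Since $T'$ is the posetification, i.e. the enriched left Kan extension $\Lan_\Disc\Disc T$, formula~\eqref{eq:leftKanalongD}/\eqref{eq:posetification} says precisely that $T'$ sends this density-presentation coinserter for $\Pf A$ to a coinserter presenting $T'(\Pf A)$: $\Disc T\ForgPos(\Pf A)^\two\rightrightarrows\Disc T\ForgPos(\Pf A)\to T'(\Pf A)$. Finally $\Up=\Pos(-,\two)$ is a $\Pos$-enriched right adjoint (the backbone adjunction $\Pf\dashv\Up$ is enriched, as emphasised in the introduction), so it sends this coinserter in $\Pos$ back to an inserter in $\DL$; identifying, via the duality and the Boolean-preservation iso $\beta$ already established, $\Up\Disc T\ForgPos(\Pf A)$ with $\ForgBADL L\FreeBADL A$ and $\Up\Disc T\ForgPos(\Pf A)^\two$ with $\ForgBADL L\FreeBADL(\two\bullet A)$, this inserter is exactly~\eqref{eq:PositivicationIns} with $L'A=\Up T'\Pf A$ as its apex. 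Hence $L'$ preserves~\eqref{eq:DLinserter}, and by Corollary~\ref{cor:positivication} it is the positivication of $L$.

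\textbf{Main obstacle.} The delicate point is bookkeeping the duality translation so that the maps, not just the objects, match up: one must check that $\Pf$ applied to $\ForgBADL\FreeBADL in_1,\ForgBADL\FreeBADL in_2$ yields exactly the two projections $\Disc\pi_0,\Disc\pi_1$ of the nerve~\eqref{eq:denspresD} (and that the reflexivity witness $i$ corresponds under duality, using Remark~\ref{rem:DLcoinserter}(4)), and that after applying $T'$ and then $\Up$ the resulting inserter legs coincide with $\ForgBADL L\FreeBADL in_1,\ForgBADL L\FreeBADL in_2$ up to the iso $\beta$ — this last step implicitly uses naturality of $\beta$ on the morphisms $in_1,in_2:A\to\two\bullet A$, which are $\DL$- but not $\BA$-morphisms, so some care is needed about on which category $\beta$ is natural (it is natural on $\BA$, and we are feeding it images $\FreeBADL A\to\FreeBADL(\two\bullet A)$ under the free functor, which is fine). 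A secondary issue is reducing from finite $\DL$s (where $\Pf,\Up,T'$ are described via finite duality) to all of $\DL$: this is handled by Proposition~\ref{prop:finitarypositivication} together with the fact that $\Up T'\Pf$ is itself finitary (since $T'$ is the posetification of a finitary $T$ and finite duality is an equivalence on finite structures), so both functors agree on the dense subcategory of finitely presented objects and hence everywhere.
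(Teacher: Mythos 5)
Your proof is correct and follows essentially the same route as the paper's: the paper's (very terse) argument is precisely ``by duality and the definition of $T'$ as $\Lan_\Disc\Disc T$, the functors $L'$ and $\Ran_{\ForgBADL}\ForgBADL L$ agree on finite $\DL$s, and the claim follows from Proposition~\ref{prop:finitarypositivication} since everything is finitary'' --- and your two verification steps are exactly the unpacking of that duality computation (dualise \eqref{eq:DLinserter} to the nerve \eqref{eq:denspresD} of $\Pf A$, apply the posetification coinserter \eqref{eq:posetification}, apply $\Up$ to get back the inserter \eqref{eq:PositivicationIns}). The only cosmetic difference is that you phrase the finite-object comparison via the preservation characterisation of Corollary~\ref{cor:positivication} before falling back, as you must, on the agreement-on-finite-$\DL$s-plus-finitariness argument, which is the paper's argument verbatim.
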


\begin{proof}
We have to show that $L'=\Ran_W WL$. By duality and definition of $T'$ as $\Lan_D DT$, we know that $L'$ and $\Ran_W WL$ agree on finite $\DL$s. Now the claim follows from Prop.\ref{prop:finitarypositivication}.
\end{proof}

\begin{rem}
\begin{enumerate}
\item 
The conditions of the theorem are not strong enough to guarantee that $L'$ is strongly finitary and thus has a presentation by operations and equations. As shown in \cite[Thm 6.20]{2013:PositiveFragments}, this is the case if $T$ preserves weak pullbacks.
\item In the case of graded modal logic, $L$ is different from $\Pow T\Uf$ even on finite $\BA$s. Therefore, the approach of \cite{2013:PositiveFragments} cannot be applied. We leave a description of the positivication of graded modal logic for a sequel.
\end{enumerate}
\end{rem}

\noindent
Now, if, in the notation of the theorem, we start with $T$ as the powerset functor, it is well known that $L$ is Kripke's normal modal logic and \cite{2013:PositiveFragments} shows that $L'$ is Dunn's positive modal logic. It follows from Theorem \ref{thm:positivication}, that the latter is indeed the positivication of the former.

\subsection{Positivication of non-monotone modal logics}

The basic idea of positivication is the following. Given a modal logic with monotone modalities, add the duals and find the axioms so that boolean negation can be pushed to the atoms. But our abstract definition of positivication is powerful enough to also apply to logics which have modalities that are not monotone. 

We will study what happens in such a situation through the example of the modal logic with one $\Box$ that does not obey any equations, not even monotonicity. In our functorial setting, this logic is given by $L=\FreeBA\ForgBA:\BA\to\BA$. Recall the functors $\BADL,\DLBA,\FreeBADL$ from \eqref{eq:GWK}. 

\begin{theorem}\label{thm:positiveFU}
The positivication of $\FreeBA\ForgBA:\BA\to\BA$ is 
$\BADL\FreeBA\ForgBA \DLBA:\DL\to\DL$.
\end{theorem}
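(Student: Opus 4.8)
The plan is to invoke the second half of Corollary~\ref{cor:positivication}: a $\DL$-endofunctor is the positivication of $L=\FreeBA\ForgBA$ if and only if it restricts to $L$ on Boolean algebras (i.e. commutes with $\ForgBADL$ up to the canonical iso) and preserves the reflexive inserters \eqref{eq:DLinserter}. So I would set $L' := \BADL\FreeBA\ForgBA\DLBA$ and verify these two conditions.

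First I would check the restriction condition. On a Boolean algebra $B$, $\DLBA$ acts as identity (the largest Boolean subalgebra of $\BADL B$ is $B$ itself, since $\DLBA\vdash\BADL$ with $\BADL$ fully faithful and the counit at a $\BA$ an iso), so $L'\BADL B = \BADL\FreeBA\ForgBA B = \BADL L B$ naturally; this gives the required $\beta:L'\BADL\cong\BADL L$.

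Next, and this is where the real work sits, I would show $L'$ preserves the inserter \eqref{eq:DLinserter}, i.e. that applying $L'=\BADL\FreeBA\ForgBA\DLBA$ to
$$A\xrightarrow{\ e\ }\ForgBADL\FreeBADL A\rightrightarrows \ForgBADL\FreeBADL(\two\cop A)$$
again yields an inserter. The key structural fact to exploit is that $\DLBA$ is a \emph{right} adjoint (to $\ForgBADL$), hence preserves all $\Pos$-enriched limits, in particular inserters; and likewise, the forgetful functor $\ForgBADL:\BA\to\DL$ preserves inserters (being a right adjoint to $\DLBA$ as well as, on the ordinary side, monadic, but enriched one argues via its right adjoint $\DLBA$). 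Meanwhile $\FreeBA\ForgBA$ is a composite of a left adjoint after a functor that — one needs to argue — sends the relevant inserter-image to something whose free Boolean algebra is still computed correctly. The cleanest route is: apply $\DLBA$ to \eqref{eq:DLinserter}; since $\DLBA$ preserves inserters and $\DLBA\ForgBADL\FreeBADL\cong\ForgBA\FreeBA$ while $\DLBA(\two\cop A)$ must be identified — here one uses Remark~\ref{rem:DLcoinserter}(1), that $\two\cop A$ is $A+A$ modulo $in_1 a\le in_2 a$, so its largest Boolean subalgebra is obtained by the corresponding quotient — one checks that $\DLBA$ applied to \eqref{eq:DLinserter} is again (isomorphic to) an inserter presentation of $\DLBA A$ built from Boolean algebras, indeed the analogous $\BA$-level diagram. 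Then apply $\FreeBA\ForgBA$: since in $\BA$ every reflexive coinserter is symmetric (Proposition~\ref{prop:symmetric}), and the image diagram is reflexive, the relevant (co)limits in $\BA$ behave well under the finitary functor $\FreeBA\ForgBA$; finally $\BADL$ preserves inserters (right adjoint). Chaining these, $L'$ applied to \eqref{eq:DLinserter} is an inserter, as required.

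The main obstacle I anticipate is the bookkeeping around $\DLBA(\two\cop A)$ and verifying that $\DLBA$ turns the $\DL$-inserter \eqref{eq:DLinserter} into the \emph{correct} $\BA$-level inserter rather than some degenerate one — this is exactly the phenomenon flagged in the Remark after Theorem~\ref{thm:NbPoset}, that passing between the ordered and discrete worlds can collapse or distort quotients. Concretely one must pin down that $\DLBA$ commutes appropriately with $\FreeBADL$ and with the tensor $\two\cop-$, and that the resulting diagram of Boolean algebras is reflexive so that Proposition~\ref{prop:symmetric} applies and the inserter coincides with an equaliser that $\FreeBA\ForgBA$ preserves. Once that identification is in hand, the rest is a routine application of Corollary~\ref{cor:positivication}. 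An alternative, possibly slicker, proof would avoid the explicit inserter computation entirely by a formal manipulation of Kan extensions: $\Ran_{\ForgBADL}(\ForgBADL\FreeBA\ForgBA)$ — using that $\ForgBADL$ is codense (Theorem~\ref{thm:BADLdense}), that $\DLBA$ is right adjoint to $\ForgBADL$ so $\Ran_{\ForgBADL}$ along the codense inclusion is computed via $\DLBA$ in the sense of \eqref{eq:Kanalongadjoint}, and that $\BADL$, being a right adjoint composed suitably, commutes with the relevant right Kan extension — to obtain $\Ran_{\ForgBADL}(\ForgBADL\FreeBA\ForgBA)\cong\BADL\FreeBA\ForgBA\DLBA$ directly; I would present whichever of the two is less calculation-heavy once the details are written out.
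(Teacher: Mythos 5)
Your strategy---verify the two conditions of Corollary~\ref{cor:positivication} directly for $\BADL\FreeBA\ForgBA\DLBA$---is legitimate in principle and genuinely different from the proof in the paper, but it has a gap at exactly the step you yourself flag as ``where the real work sits'', and the tool you reach for there does not apply. After applying $\DLBA$ to \eqref{eq:DLinserter} you obtain (granting that this enriched right adjoint preserves inserters, and noting that $\DLBA\ForgBADL\FreeBADL\cong\FreeBADL$, not $\ForgBA\FreeBA$ as you write---the latter does not even have the right domain) an inserter diagram in $\BA$; since $\BA$ has discrete homsets this is an equaliser $\DLBA A\to\FreeBADL A\rightrightarrows\FreeBADL(\two\cop A)$. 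You must then show that $\FreeBA\ForgBA$ preserves this equaliser. Proposition~\ref{prop:symmetric} cannot help: it concerns reflexive coinserters and coequalisers, i.e.\ the colimit side, whereas here you need preservation of a \emph{limit} by a functor whose outer factor $\FreeBA$ is a left adjoint and has no reason to preserve equalisers. No argument is supplied for this step, and it is precisely the non-formal content of the theorem. Your fallback Kan-extension route has the same problem in different clothing: \eqref{eq:Kanalongadjoint} lets you compute $\Lan_{\ForgBADL}$ (not $\Ran_{\ForgBADL}$) by precomposing with the right adjoint $\DLBA$, and $\ForgBADL$ has no \emph{enriched} left adjoint, so the enriched $\Ran_{\ForgBADL}$ is not computed by precomposition with anything; that it nevertheless lands on $\BADL\FreeBA\ForgBA\DLBA$ is the whole content of the statement and cannot be obtained by formal adjoint juggling (this is the dual of the ``curious'' coincidence discussed in the remark after Theorem~\ref{thm:NbPoset}).

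The paper's proof avoids all of this by dualising work already done: Theorem~\ref{thm:NbPoset} shows that the posetification of $\ForgBA\FreeBA$ is $\Disc\ForgBA\FreeBA\Con$; the duality between \eqref{eq:CDV} and \eqref{eq:GWK} on finite structures transports this to $\Ran_\BADL\BADL\FreeBA\ForgBA=\BADL\FreeBA\ForgBA\DLBA$ on finite distributive lattices; and Proposition~\ref{prop:finitarypositivication}, together with the finitarity of $\BADL,\FreeBA,\ForgBA,\DLBA$, extends the identification to all of $\DL$. If you want to salvage your direct approach, the missing lemma is exactly the preservation of the equaliser above by $\FreeBA\ForgBA$, and proving it would essentially amount to redoing the coequaliser analysis behind Theorem~\ref{thm:NbPoset} in dual form.
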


\begin{proof}
We know from Theorem~\ref{thm:NbPoset} that $\Lan_\Disc \Disc\ForgBA\FreeBA = \Disc\ForgBA\FreeBA \Con$. By duality, on finite $\DL$s, $\Ran_\BADL \BADL\FreeBA\ForgBA = \BADL\FreeBA\ForgBA\DLBA$. Now the result follows from Prop.\ref{prop:finitarypositivication} since all of $\BADL,\FreeBA,\ForgBA,\DLBA$ are finitary.
\end{proof}

\begin{remark} From a logical point of view,  the appearance of $\DLBA:\DL\to\BA$ in Thm \ref{thm:positiveFU} tells us that, given $A\in \DL$, we are only allowed to build a formula $\Box a$, $a\in A$, \emph{if $a$ lies in a boolean subalgebra} (ie $a$ has a complement). This side condition takes us out of the realm of equational logic and, hence, of modal logics given  by axioms.  
This is related to the fact that $\DLBA$ is \emph{not} strongly finitary \cite[Example 6.6]{2013:PositiveFragments} and, therefore, functors involving $\DLBA$ cannot be expected to have a presentation by operations and equations. 
\end{remark}

To give another example of an extension by non-monotone modalities, the logic $\BADL_o\FreeBADL:\DL_o\to\DL_o$ is a modal logic over distributive lattices with one unary modality obeying the axioms of negation. In other words, $\BADL_o\FreeBADL$-algebras over $\DL_o$ are just boolean algebras. Clearly, $\BADL_o\FreeBADL$ is not locally monotone and negation, considered as a unary modality, cannot be `positivised'. Nevertheless, writing $I$ for the inclusion $\DL_o\to\DL$, the right Kan extension $\Ran_I I\BADL_o\FreeBADL$ does exist and is the identity.

\begin{proposition}
$\Ran_I I\BADL_o\FreeBADL = \Id$.
\end{proposition}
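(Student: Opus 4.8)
The plan is to recognise $\Ran_I I\BADL_o\FreeBADL$ as the positivication of the identity functor $\Id_\BA:\BA\to\BA$, and then to observe that $\Id_\DL$ is manifestly such a positivication.

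First I would unwind the two-stage description of positivication recorded around diagram \eqref{eq:2stageKan}. For a $\BA$-endofunctor $L$, its positivication $\Ran_\BADL\BADL L$ is computed by first forming $\tilde L=\Ran_{\BADL_o}(\BADL_o L):\DL_o\to\DL_o$ and then extending along $I$, i.e.\ $\Ran_\BADL\BADL L=\Ran_I I\tilde L$; this is exactly point~(i) of that discussion, using that $\DL_o\to\DL$ preserves ordinary colimits together with \cite[Thm~4.47]{1982:kelly}. Specialising to $L=\Id_\BA$ and using $\FreeBADL\dashv\BADL_o$, so that by \eqref{eq:Kanalongadjoint} one has $\Ran_{\BADL_o}\BADL_o=\BADL_o\FreeBADL$, we get $\tilde L=\BADL_o\FreeBADL$. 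Hence $\Ran_I I\BADL_o\FreeBADL$ is precisely the positivication $\Ran_\BADL\BADL$ of $\Id_\BA$.

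Next I would invoke Corollary~\ref{cor:positivication}: a $\DL$-endofunctor is a positivication iff it preserves Boolean algebras and the inserters \eqref{eq:DLinserter}. The identity $\Id_\DL$ does both trivially, since it preserves all (weighted) limits, and because $\Id_\DL\circ\BADL=\BADL=\BADL\circ\Id_\BA$ with $\BADL$ fully faithful, $\Id_\DL$ is the positivication of $\Id_\BA$. By uniqueness of right Kan extensions, this and the previous paragraph give $\Ran_I I\BADL_o\FreeBADL=\Id_\DL$.

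The step I expect to cause most friction is the two-stage bookkeeping of the first paragraph: one must check that the hypotheses of \cite[Thm~4.47]{1982:kelly} really are met in the $\Pos$-enriched setting, i.e.\ that $\Ran_{\BADL_o}\BADL_o$ is pointwise and that re-extending it along $I$ recovers $\Ran_\BADL\BADL$. To sidestep this, one may instead verify the universal property of $\Ran_I$ directly. Given a $\Pos$-functor $G:\DL\to\DL$ and a natural transformation $\gamma:GI\Rightarrow I\BADL_o\FreeBADL$, one checks that for each $A\in\DL$ the component $\gamma_A:GA\to\BADL\FreeBADL A$ satisfies $\BADL\FreeBADL in_1\circ\gamma_A\le\BADL\FreeBADL in_2\circ\gamma_A$ --- this is where local monotonicity of $G$ and the inequation $in_1\le in_2$ of Remark~\ref{rem:DLcoinserter}(1) are used --- so $\gamma_A$ factors uniquely through the inserter leg $e_A=\eta_A:A\to\BADL\FreeBADL A$ of \eqref{eq:DLinserter}. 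This produces the mediating $\bar\gamma:G\Rightarrow\Id_\DL$, whose naturality and compatibility with $\gamma$ are then routine, using that each $e_A$ is monic because inserter legs are monos.
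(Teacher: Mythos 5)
Your argument is correct, and its main line (paragraphs one and two) is essentially the paper's own proof: identify $\BADL_o\FreeBADL$ as the intermediate stage $\tilde H$ of the two-stage extension \eqref{eq:2stageKan} for $H=\Id_\BA$ via \eqref{eq:Kanalongadjoint}, and then conclude that the full extension is $\Id_\DL$ from codensity of $\BADL$ --- the paper cites Theorem~\ref{thm:BADLdense} together with \cite[Thm 5.1]{1982:kelly} where you route the same fact through Corollary~\ref{cor:positivication}, but these are the same appeal to the codensity presentation. Where you genuinely add something is the third paragraph: rather than trusting the \cite[Thm 4.47]{1982:kelly} bookkeeping, you verify the universal property of $\Ran_I(I\BADL_o\FreeBADL)$ directly, showing that any $\gamma: GI\Rightarrow I\BADL_o\FreeBADL$ with $G$ locally monotone satisfies $\BADL\FreeBADL in_1\circ\gamma_A\le\BADL\FreeBADL in_2\circ\gamma_A$ (by naturality at $in_1,in_2$ plus $G in_1\le G in_2$) and hence factors uniquely through the inserter leg $e_A$ of \eqref{eq:DLinserter}. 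This is a sound and more self-contained argument; it also isolates exactly where enrichment does the work of killing the non-monotone modality, namely in the use of local monotonicity of the test functor $G$. The only loose ends are routine: you should note that the resulting bijection is an isomorphism of \emph{posets} (which holds because $e_A$, being an inserter leg in a category of ordered algebras, is an order-embedding), and that the correspondence is natural in $G$; neither causes difficulty.
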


\begin{proof}
Going back to \eqref{eq:2stageKan}, we have $\tilde H=\BADL_o\FreeBADL$, which means that we can take $H=\Id$. But then, by Thm~\ref{thm:BADLdense} and \cite[Thm 5.1]{1982:kelly}, we have $H'=\Id$.
\end{proof}

To summarise, we have seen two examples of positivication of modal extensions by non-monotone modalities. In the first case, the non-monotone modality was made monotone by adding a side-condition restricting its use. In the second case, the non-monotone modality was eliminated.

\section{Positive coalgebraic logic}\label{sec:PCL}
\newcommand{\lang}{\mathcal{L}}
\subsection{Semantics}

Recall from the introduction that we wish to move from an ordinary BCL given by the diagram (\ref{diag:boolean}) to a $\Pos$-enriched PCL given by diagram (\ref{diag:positive}). In Sections \ref{sec:Posetification} and \ref{sec:Positivication} we have shown how to build $T'$ from $T$ and $L'$ from $L$ respectively. The missing element is the construction of $\delta'$ from $\delta$. Let us first remind the reader of how $\delta$ defines the interpretation, this will also be the occasion to fix some notation.
\begin{theorem}[\cite{2012:kurz-rosicky}]
An endofunctor $L$ on a variety $\Var$ has a finitary presentation by operations and
equations iff it preserves sifted colimits, in which case $\Alg(L)$ is a variety.
\end{theorem}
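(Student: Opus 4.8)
The plan is to prove the three assertions separately; the implication ``$L$ preserves sifted colimits $\Rightarrow$ $L$ has a presentation'' is the substantial one, the converse and the statement about $\Alg(L)$ being comparatively soft. Throughout I write $\Free\dashv\Forg$ for the free/forgetful adjunction between $\Var$ and $\Set$, and I use that, $\Var$ being a finitary variety, $\Forg$ preserves sifted colimits and that sifted colimits in $\Var$ are precisely filtered colimits together with reflexive coequalisers \cite{2010:adamekAlgebraic}.

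For the easy direction, a presentation of $L$ by operations $\Sigma$ and equations $E$ exhibits $L$ as a pointwise (reflexive) coequaliser $L_E\rightrightarrows L_\Sigma\epi L$ in $[\Var,\Var]$, where $L_\Sigma$ and $L_E$ are ``polynomial'' functors of the shape $A\mapsto\Free\big(\coprod_n\Sigma_n\times(\Forg A)^n\big)$. Each such functor preserves sifted colimits: $\Forg$ does, the $\Set$-endofunctor $X\mapsto\coprod_n\Sigma_n\times X^n$ does because finite products commute with sifted colimits in $\Set$ by the very definition of ``sifted'' while coproducts commute with all colimits, and $\Free$ preserves all colimits. As colimits in $[\Var,\Var]$ are computed pointwise and colimits commute with colimits, $L$ then preserves sifted colimits.

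For the substantial direction, assume $L$ preserves sifted colimits. First I would recall that every $A\in\Var$ is canonically a sifted colimit of finite free algebras: it is the reflexive coequaliser of $\Free\Forg\Free\Forg A\rightrightarrows\Free\Forg A$, and each free algebra on a set is a filtered colimit of the free algebras $\Free n$ on finite sets $n$. Hence $L$ is the left Kan extension of its restriction to finite free algebras, i.e.\ is determined by the sets $\Sigma_n:=\Forg L\Free n$ of ``$n$-ary operations'' together with their functorial action on maps between finite free algebras. Forming the polynomial functor $L_\Sigma$ as above, freeness yields a canonical natural transformation $\epsilon\colon L_\Sigma\to L$ sending a generator $(\sigma,(a_1,\dots,a_n))$ to $\Forg L(h)(\sigma)$, where $h\colon\Free n\to A$ is the map picking out $(a_1,\dots,a_n)$. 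One checks that $\epsilon_{\Free n}$ already hits every element of $\Forg L\Free n=\Sigma_n$ on generators (pair $\sigma$ with the generic $n$-tuple), so $\epsilon_{\Free n}$ is a regular epi for finite $n$; since $L$ and $L_\Sigma$ preserve sifted colimits and regular epis are stable under sifted colimits in $\Var$, $\epsilon$ is a pointwise regular epi. The equations $E$ are then read off from the kernel pair of $\epsilon$, presented over a further signature, so that $L\cong\colim(L_E\rightrightarrows L_\Sigma)$ is a presentation. I expect the main obstacle to be exactly this last step --- arranging the kernel-pair presentation so the process closes up into genuine finitary operations and equations rather than an infinite regress --- together with a careful verification that the iterated (filtered-then-reflexive-coequaliser) presentation of $A$ is genuinely preserved by every sifted-colimit-preserving endofunctor.

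Finally, that $\Alg(L)$ is a variety. Given a presentation $(\Sigma,E)$, an $L$-algebra structure $a\colon LA\to A$ over $\Var$ is exactly an interpretation of the operations $\Sigma_n$ on $\Forg A$ satisfying the equations $E$, with no further coherence required, since a morphism out of the presented algebra $LA$ is free on generators; hence $\Alg(L)$ is the variety with signature $\Sigma_\Var+\Sigma$ and axioms $E_\Var+E$, where $(\Sigma_\Var,E_\Var)$ presents $\Var$ over $\Set$. Alternatively, without invoking a presentation: since $L$ preserves colimits of $\omega$-chains, the free $L$-algebra monad $L^\ast$ on the cocomplete category $\Var$ exists; being assembled from $L$, $\Id$, binary coproducts and colimits of chains it preserves sifted colimits, so it is a finitary monad on the variety $\Var$, and a category finitary-monadic over a variety is a variety.
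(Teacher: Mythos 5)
The paper does not actually prove this statement: it is imported as a black box from the cited reference \cite{2012:kurz-rosicky}, so there is no in-text proof to compare yours against. Judged on its own terms, your outline follows the standard strategy (which is essentially that of the cited source). The easy direction is correct as written: a presentation exhibits $L$ as a pointwise reflexive coequaliser of polynomial functors, each of which preserves sifted colimits because finite products commute with sifted colimits in $\Set$ while $\Forg$, $\Free$ and coproducts all preserve them, and colimits commute with colimits. Your two arguments that $\Alg(L)$ is a variety are also both viable, the first (signature $\Sigma_\Var+\Sigma$, axioms $E_\Var+E$) being the one used in the literature.

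The one genuine gap sits exactly where you flag it, and it is worth saying precisely why the step as written fails. You propose to read off $E$ by ``presenting the kernel pair of $\epsilon$ over a further signature'', implicitly by rerunning the argument on $K=L_\Sigma\times_L L_\Sigma$. But $K$ is a pullback, and sifted colimits commute with finite \emph{products}, not with all finite limits: reflexive coequalisers do not commute with pullbacks in $\Set$ or $\Var$, so $K$ need not preserve sifted colimits and the recursion does not close. The repair is to never present $K$ as a functor: take as equations the sets $E_n=\Forg K(\Free n)$ for finite $n$ (i.e.\ pairs of $\Sigma$-terms in $n$ variables), form the polynomial functor $L_E$ on these generators with its two induced maps to $L_\Sigma$, and then, for arbitrary $A$, write $A$ as the canonical sifted colimit of finitely generated free algebras and use that a colimit of coequaliser diagrams is a coequaliser diagram to get $LA=\mathrm{coeq}\big(\colim_i K(\Free n_i)\rightrightarrows L_\Sigma A\big)$; since $L_E A\to\colim_i K(\Free n_i)$ is epi, this exhibits $LA$ as $\mathrm{coeq}(L_E A\rightrightarrows L_\Sigma A)$ without $K$ ever needing to preserve sifted colimits. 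With that substitution your argument goes through.
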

For $L$ strongly finitary on either $\BA$ or $\DL$ and $A$ an object of the corresponding category, let $\Free_L(A)$ denote the free $L$-algebra over $A$. In particular, if $V$ denotes a countable set of propositional variables and $\Free V$ is the freely generated object over $V$ in $\BA$ or $\DL$, then the free $L$-algebra $\Free_L(\Free V)$ is the algebra of $L$-modal formulas for the syntax defined by $L$, which we denote more succinctly by $\lang$. Now, let $\gamma: X\to TX$ be a $T$-coalgebra and assume that it comes equipped with a \emph{valuation} $v:\Free V\to \Pow X$, the interpretation map $\lsem -\rsem_{(\gamma,v)}$ is the unique map given by initiality of $\lang$ amongst $L(-)+\Free V$-algebras.
\[
\xymatrix@C=12ex@R=3ex
{
L\lang+\Free V\ar[rr]_{\simeq} \ar[d]_{L\lsem-\rsem_{(\gamma,v)}+\id_{\Free V}} & & \lang\ar[d]^{\lsem-\rsem_{(\gamma,v)}}\\
L\Pow X +\Free V\ar[r]^{\delta_X+\id_{\Free V}} & \Pow T X+\Free V\ar[r]^{\Pow\gamma+v}  &\Pow X
}
\]

We can now turn to defining $\delta'$ from $\delta$. To avoid unsightly $(-)\op$ symbols appearing everywhere we simply consider $\Pow,\Up,\Uf,\Pf$ to be \emph{contravariant functors} throughout (as opposed to covariant functors from/to an $(-)\op$ category). The following definition was given in \cite{2013:PositiveFragments}.
\begin{definition}
A logic $(L',\delta')$ for $T'$ is \emph{a positive fragment} of the logic $(L,\delta)$ for $T$, if there exist natural transformations $\alpha:T'\Disc\to \Disc T$ and $\beta: L'\ForgBADL \to \ForgBADL L$ such that $\ForgBADL\delta\circ \beta \Uf = \Pf\alpha \circ \delta' \Disc$.
\end{definition} 

Clearly, we have natural transformations $\alpha: T'\Disc\to \Disc T$ and $\beta: L'\ForgBADL \to \ForgBADL L$ by construction of the posetification $T'$ and of the positivication $L'$. We can construct a natural transformation $\delta'$ as follows. First, it is not hard to check that $\ForgBADL\circ \Pow=\Up\circ\Disc$. Thus, given a natural transformation $\delta: L\Pow \to \Pow T$ we get a natural transformation
\[
L'\Up \Disc  =L'\ForgBADL\Pow\stackrel{\beta_{\Pow}}{\Longrightarrow} \ForgBADL L\Pow\stackrel{W\delta}{\Longrightarrow}W\Pow T = \Up \Disc T 
\]

\begin{lemma}
For any poset $X$, the following diagram is an inserter:
\[
\xymatrix
{
\Up T'X\ar[r] & \Up \Disc T\ForgPos X \ar@<3pt>[r]^-{\Up\Disc T\pi_1}\ar@<-3pt>[r]_-{\Up\Disc T\pi_0} & \Up \Disc T\ForgPos X^\two 
}
\]
\end{lemma}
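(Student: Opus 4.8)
The plan is to obtain the claimed inserter simply by applying the functor $\Up$ to the coinserter presentation \eqref{eq:posetification} of $T'X$. Recall that, for a poset $X$, the posetification $T'$ is defined so that $e_X\colon\Disc T\ForgPos X\to T'X$ is the coinserter in $\Pos$ of the pair $\Disc T\pi_0,\Disc T\pi_1\colon\Disc T\ForgPos X^\two\rightrightarrows\Disc T\ForgPos X$, with $e_X\circ\Disc T\pi_0\le e_X\circ\Disc T\pi_1$. Since a coinserter in $\Pos$ is a weighted colimit, it is a weighted limit in $\Pos\op$; concretely, it exhibits $e_X\colon T'X\to\Disc T\ForgPos X$ as the inserter, taken in $\Pos\op$, of the (arrow-reversed) pair $\Disc T\pi_0,\Disc T\pi_1\colon\Disc T\ForgPos X\rightrightarrows\Disc T\ForgPos X^\two$.

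The one ingredient that is not pure bookkeeping is the observation, highlighted in the introduction, that the dual adjunction $\Pf\dashv\Up\colon\Pos\op\to\DL$ is $\Pos$-enriched. Hence $\Up$ is a $\Pos$-enriched right adjoint and therefore preserves all weighted limits that exist, inserters in particular. First I would record that $\Up$, being contravariant and locally monotone, sends the coinserter inequality of \eqref{eq:posetification} to $\Up(\Disc T\pi_0)\circ\Up(e_X)\le\Up(\Disc T\pi_1)\circ\Up(e_X)$ in $\DL$. Then, applying $\Up$ to the inserter in $\Pos\op$ just described and using that $\Up$ preserves it, one gets exactly the diagram in the statement, with $\Up e_X$ as the inserting map and with $\Up\Disc T\pi_0$ (not $\Up\Disc T\pi_1$) playing the role of the ``lower'' arrow.

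I do not expect a genuine obstacle; the only point requiring attention is the bookkeeping around contravariance, namely checking that the coinserter inequality of \eqref{eq:posetification} flips to the correct orientation of the inserter inequality. As a cross-check, or as a fully elementary alternative that avoids invoking preservation of enriched weighted limits, one can verify the universal property directly: since $\Disc T\ForgPos X$ is discrete, $\Up\Disc T\ForgPos X=\cPow(T\ForgPos X)$, and the inserter of the two parallel arrows is the sub-distributive-lattice of those $U$ with $(\Disc T\pi_0)^{-1}U\subseteq(\Disc T\pi_1)^{-1}U$, i.e.\ of those $U\subseteq T\ForgPos X$ that are upward closed under the relation $R_T$ of step (i) of the posetification procedure; since $R_T$ and its transitive closure $\le_T$ have the same upsets and every $\le_T$-upset is a union of $\equiv_T$-classes, this sub-lattice is precisely the image of the (injective) map $\Up e_X$, which identifies it with $\Up T'X$.
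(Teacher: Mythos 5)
Your proposal is correct and takes essentially the same route as the paper: both apply $\Up$ to the coinserter presentation \eqref{eq:posetification} and invoke the fact that $\Up$ turns coinserters into inserters, the paper justifying this by $\Up\cong\hom(-,\two)$ being an enriched representable and you by $\Up$ being the enriched right adjoint of $\Pf\dashv\Up$ --- two standard and interchangeable reasons for the same preservation fact. Your elementary cross-check via $R_T$-upsets is a nice sanity check but not needed.
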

\begin{proof}
$\Disc TX_0\rightrightarrows \Disc TX_1\to T'X$ is a coinserter and since $\Up$ is the enriched hom functor $\hom(-,\two)$ it turns coinserters into inserters.
\end{proof}


By naturality of $\beta$ and $\delta$, the two right-hand side squares of the following diagram commute, and this defines a $\Pos$-enriched natural transformation $\delta': L'\Up\to \Up (T')$.
\begin{equation}\label{diag:deltaprime}
\xymatrix@R=5ex
{
L'\Up X \ar@{-->}[d]_{\delta'_X}\ar[r]& L'\Up \Disc \ForgPos X \ar@<3pt>[r]^-{L'\Up\Disc \pi_1}\ar@<-3pt>[r]_-{L'\Up\Disc \pi_0}\ar[d]_{\ForgBADL \delta_{\ForgPos X}\circ \beta_{\Pow \ForgPos X}} & L'\Up \Disc \ForgPos X^\two \ar[d]^{\ForgBADL\delta_{\ForgPos X^\two }\circ \beta_{\Pow \ForgPos X^\two }}\\
\Up T'X\ar[r] & \Up \Disc T\ForgPos X\ar@<3pt>[r]^-{\Up\Disc T\pi_1}\ar@<-3pt>[r]_-{\Up\Disc T\pi_0} & \Up \Disc T\ForgPos X^\two 
}
\end{equation}

\begin{theorem}
With $\delta'$ defined as above, the logic $(L',\delta')$ for $T'$ is a positive fragment of the logic $(L,\delta)$ for $L'$. 
\end{theorem}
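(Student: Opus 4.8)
The plan is to read off the two natural transformations the definition of a positive fragment asks for directly from the constructions of Sections~\ref{sec:Posetification} and~\ref{sec:Positivication}, and then to check the coherence equation by restricting everything along $\Disc$. By construction $T'=\Lan_\Disc\Disc T$ carries a natural isomorphism $\Disc T\cong T'\Disc$, whose inverse we take as $\alpha\colon T'\Disc\to\Disc T$, and by Corollary~\ref{cor:positivication} the positivication $L'=\Ran_\ForgBADL\ForgBADL L$ carries the natural isomorphism $\beta\colon L'\ForgBADL\to\ForgBADL L$. It then remains only to verify the single equation relating $\delta,\delta',\alpha,\beta$. Unwinding the canonical identifications $\ForgBADL\Pow=\Up\Disc$ and $\ForgPos\circ\Disc=\Id$ and transporting along $\alpha$ and $\beta$, that equation says exactly that, for every set $S$, the component $\delta'_{\Disc S}$ equals — up to the isomorphism $\alpha$ — the composite $\ForgBADL\delta_S\circ\beta_{\Pow S}$; equivalently, that $\delta'$ restricted along $\Disc$ is the natural transformation $\ForgBADL\delta\circ\beta_\Pow\colon L'\Up\Disc\to\Up\Disc T$ out of which $\delta'$ was built in~\eqref{diag:deltaprime}.

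To establish this, I would evaluate diagram~\eqref{diag:deltaprime} at a discrete poset $X=\Disc S$. There the density presentation~\eqref{eq:denspresD} degenerates: $\ForgPos\Disc S=S$, the cotensor $(\Disc S)^\two$ is again the discrete poset on $S$ (the order relation $\{(x,x')\mid x\le x'\}$ collapsing to the diagonal), and $\pi_0=\pi_1=\id$; hence the coinserter arrow $\Disc\ForgPos\Disc S\to\Disc S$ is the identity, so the top-left horizontal arrow of~\eqref{diag:deltaprime} is the identity. By the same degeneration applied to the inserter of the Lemma preceding the theorem, the bottom-left horizontal arrow is, up to the isomorphism $\alpha$, also the identity (two equal parallel maps have the whole domain as inserter). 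Since $\delta'_{\Disc S}$ is, by its definition in~\eqref{diag:deltaprime}, the unique arrow making the left square commute — the right column factoring through the inserter $\Up T'X$ by naturality of $\delta$ and $\beta$, which is precisely what makes the two right squares commute — it is forced to equal $\ForgBADL\delta_S\circ\beta_{\Pow S}$ modulo $\alpha$. Naturality of this identity in $S$ is inherited from that of $\delta'$, $\beta$, $\delta$ and of~\eqref{eq:denspresD}, so the coherence equation holds as an equality of natural transformations, establishing that $(L',\delta')$ for $T'$ is a positive fragment of $(L,\delta)$ for $T$.

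I do not expect a serious obstacle: the theorem essentially records that $\delta'$ was manufactured in~\eqref{diag:deltaprime} precisely so as to restrict, along $\Disc$, to the data the definition of a positive fragment demands. The points that require care are purely bookkeeping — tracking the canonical identifications $\ForgBADL\Pow=\Up\Disc$ and $\ForgPos\Disc=\Id$ together with the isos $\alpha,\beta$; checking that both the coinserter~\eqref{eq:denspresD} and the inserter of the preceding Lemma genuinely degenerate on discrete posets; and confirming that the universal property of that inserter pins $\delta'_{\Disc S}$ down uniquely. Once these are in place the required equality is immediate.
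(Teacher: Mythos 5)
Your proposal is correct and follows essentially the same route as the paper: the paper's proof likewise observes that a discrete poset $\Disc S$ has the trivial coinserter presentation $\Disc S\rightrightarrows\Disc S\to\Disc S$, so that $T'\Disc S=\Disc TS$ and $\alpha$ is the identity, whence the required equation is read off from diagram~(\ref{diag:deltaprime}). You merely spell out in more detail the degeneration of the presentation and the uniqueness coming from the inserter's universal property, which the paper leaves implicit.
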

\begin{proof}
We need to check that $\ForgBADL\delta\circ \beta \Uf = \Pf\alpha \circ \delta' \Disc$. Given a set $X$, the poset $\Disc X$ has a completely trivial coinserter presentation given by $\Disc X \rightrightarrows \Disc X \to \Disc X$, and in particular $T'\Disc X=\Disc TX$, i.e. $\alpha_X=\id_X$, and the result follows from the diagram (\ref{diag:deltaprime}).
\end{proof}
 

\subsection{Completeness}
We say that a BCL or a PCL $(L,\delta)$ is \emph{weakly complete} for $T$-coalgebras if for any formulas $\phi,\psi\in\lang$ such that $\phi \nleq\psi$, there exists a $T$-coalgebra $\gamma:X\to TX$ and a valuation $v: \FreeBA V\to\Pow X$ ($v: \FreeDL V\to \Up X$ for posets) and an element $x\in X$ such that $
 x\in\lsem \phi\rsem_{(\gamma,v)}$  but $x\notin\lsem \psi\rsem_{(\gamma,v)}$.
The following theorem gives a sufficient condition for weak completeness.

\begin{theorem}[\cite{2004:KKP}]
A BCL or a PCL $(L,\delta)$ for a functor $T$ is weakly complete if $\delta$ is component-wise injective.
\end{theorem}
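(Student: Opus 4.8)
The plan is to prove the contrapositive by constructing a \emph{canonical model} and establishing a Truth Lemma, carried out abstractly in the Jónsson--Tarski/Stone style. Write $a:L\lang\to\lang$ and $\iota:\Free V\to\lang$ for the two components of the initial $L(-)+\Free V$-algebra structure on $\lang$, and let $\eta:\lang\to\Pow\Uf\lang$ (respectively $\eta:\lang\to\Up\Pf\lang$ in the positive case) be the unit of the dual adjunction, sending a formula $\phi$ to the set of (ultra/prime) filters containing it. Two purely ``propositional'' facts I would isolate first: $\eta$ is injective and \emph{order-reflecting} (a $\BA$- resp.\ $\DL$-embedding, by the (ultra/prime) filter separation theorem), and $v:=\eta\circ\iota$ is a legitimate valuation. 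Hence, if I can exhibit a coalgebra structure $\gamma$ on $\Uf\lang$ (resp.\ $\Pf\lang$) with $\lsem-\rsem_{(\gamma,v)}=\eta$, then $\phi\nleq\psi$ gives $\eta\phi\nsubseteq\eta\psi$, i.e.\ a filter witnessing the separation in the canonical model, which is exactly weak completeness.

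Everything thus reduces to choosing $\gamma$ so that the modal half of the interpretation diagram commutes, namely $\eta\circ a=\Pow\gamma\circ\delta_{\Uf\lang}\circ L\eta$; by initiality this single equation forces $\lsem-\rsem_{(\gamma,v)}=\eta$. I would transpose it across the dual adjunction. Let $\rho:T\Uf\lang\to\Uf L\lang$ be the mate of $\delta$, i.e.\ the transpose of $g:=\delta_{\Uf\lang}\circ L\eta:L\lang\to\Pow T\Uf\lang$, and let $\beta:\Uf\lang\to\Uf L\lang$ be the transpose of $\eta\circ a$, explicitly $\beta(u)=a\inv(u)$. A routine naturality-of-transpose calculation shows the modal square commutes iff $\rho\circ\gamma=\beta$. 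So the whole problem becomes the Existence Lemma: \emph{factor $\beta$ through the mate $\rho$}, and this is the only place injectivity of $\delta$ is used.

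The role of injectivity is visible already in the naive infinite model on all of $\Uf\lang$, where one needs a point $m\in T\Uf\lang$ with $m\in g(s)\Leftrightarrow a(s)\in u$. The family $\{g(s)\mid a(s)\in u\}\cup\{\,T\Uf\lang\setminus g(s)\mid a(s)\notin u\}$ has the finite intersection property precisely because $g$ is injective: a finite intersection equals $g(s_\ast)$ for some $s_\ast\in L\lang$, and $g(s_\ast)=\emptyset$ forces $s_\ast=\bot$, which via the monotone map $a$ would make the filter $u$ contain a meet while omitting a larger join --- impossible. The main obstacle is converting this finite intersection property into an \emph{actual} point $m$: for infinite $\lang$ and a bare $\Set$-functor $T$ there is no compactness to invoke, which is exactly why only \emph{weak} completeness is claimed.

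I would resolve this by the standard finitary reduction, which is where the paper's finitarity is used. Since $L$ is finitary (and $T$ preserves finite sets), it suffices to separate $\phi$ and $\psi$ inside the finite subalgebra $A\subseteq\lang$ generated by their subformulas, working on the finite dual $\Uf A$ (resp.\ $\Pf A$). There the family is finite, so its nonempty intersection yields the point at once; equivalently, on finite structures $\eta_A$ is an isomorphism, so $g_A=\delta_{\Uf A}\circ L\eta_A$ is injective directly from injectivity of $\delta$, and by Stone (resp.\ Priestley/Birkhoff) duality this embedding dualises to a \emph{surjective} mate $\rho$ --- an embedding of finite (distributive) lattices into a finite powerset/upset lattice cannot send an atom (resp.\ join-irreducible) to $\bot$. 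Hence $\gamma$ exists outright, the Truth Lemma holds, and the finite canonical model separates $\phi$ from $\psi$. The positive case runs verbatim with $\Uf,\Pow$ replaced by $\Pf,\Up$: $\eta$ is still an order-reflecting $\DL$-embedding, the separation step uses the prime filter theorem, and the $\Pos$-enriched Truth Lemma yields $\lsem-\rsem_{(\gamma,v)}=\eta$ as an inequality-respecting map, so $\phi\nleq\psi$ is witnessed order-theoretically. Finally, one-step soundness (that $\gamma$ and $v$ really form a model) is automatic from $\delta$ being well defined, so injectivity is invoked \emph{only} in the Existence Lemma.
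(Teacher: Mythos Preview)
The paper does not prove this theorem: it is stated with a citation to \cite{2004:KKP} and no argument is given. So there is nothing in the paper to compare your proof against; the result is simply imported.

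On the substance of your attempt: the canonical-model strategy and the reduction to factoring $\beta$ through the mate $\rho$ are the right ideas, and your FIP calculation for the infinite model is correct once unpacked. The genuine gap is in your finitary reduction. You pass to ``the finite subalgebra $A\subseteq\lang$ generated by their subformulas'' and then want a coalgebra on $\Uf A$ making the Truth Lemma hold. But $A$ is only a Boolean (resp.\ distributive-lattice) subalgebra of $\lang$; it carries no $L$-algebra structure in general, so there is no map $LA\to A$ whose dual would give you the $\beta_A:\Uf A\to\Uf LA$ you need. Without that, the surjectivity of $\rho_A:T\Uf A\to\Uf LA$ (which you correctly derive from injectivity of $\delta_{\Uf A}$ and finite duality) has nothing to factor. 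You also silently add the hypothesis that $T$ preserves finite sets, which is not part of the theorem as stated.

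The standard repair, and the argument actually given in the cited literature, does not use a subformula-closed subalgebra at all. One works instead with the initial chain $\lang_0=\Free V$, $\lang_{n+1}=L\lang_n+\Free V$ and proves by induction on $n$ that the canonical ``one-step'' interpretation $\lang_n\to\Pow X_n$ is injective for suitably constructed finite $X_n$; injectivity of $\delta$ is exactly what makes the inductive step go through (it is the ``one-step completeness'' condition). Since $\lang=\colim_n\lang_n$ and the connecting maps are injective, any $\phi\nleq\psi$ is already witnessed at some finite stage. This avoids needing an $L$-algebra structure on a finite object and does not require $T$ to preserve finite sets.
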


Weak completeness transfers from a boolean logic to its positive fragment. 

\begin{theorem}\label{thm:weakCompTransfer}
For a BCL $(L,\delta)$ defined by a strongly finitary functor $L:\BA\to\BA$, if $\delta$ is component-wise injective, then so is $\delta'$. In particular $(L',\delta')$ is then weakly complete.
\end{theorem}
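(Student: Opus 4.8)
The plan is to show that $\delta'_X$ is injective for every poset $X$ by exhibiting it as a map obtained from the injective maps $\delta_{\ForgPos X}$ and $\delta_{\ForgPos X^\two}$ via an inserter construction, and then using that inserters in categories of ordered algebras are computed as subobjects, so that a cone of injective maps into an inserter diagram yields an injective comparison map. Concretely, I would work with diagram \eqref{diag:deltaprime}. Its top row is an inserter presentation of $L'\Up X$ (this is exactly Corollary \ref{cor:positivication} applied to the $\DL$-presentation $\Up X \to \Up\Disc\ForgPos X \rightrightarrows \Up\Disc\ForgPos X^\two$ of $\Up X$, recalling $\ForgBADL\Pow = \Up\Disc$, together with the identification $L'\Up\Disc = \ForgBADL L\Pow$ coming from $\beta$). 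Its bottom row is an inserter by the Lemma immediately preceding \eqref{diag:deltaprime}. So $\delta'_X$ is precisely the unique comparison map between two inserters induced by the middle and right vertical maps.

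Next I would identify the two relevant vertical maps. The middle map is $\ForgBADL\delta_{\ForgPos X}\circ\beta_{\Pow\ForgPos X} : \ForgBADL L\Pow\ForgPos X \to \ForgBADL\Pow T\ForgPos X$, and the right-hand map is $\ForgBADL\delta_{\ForgPos X^\two}\circ\beta_{\Pow\ForgPos X^\two}$. Here $\beta$ is an isomorphism on objects of the form $\Pow Y = \ForgPos$ applied to a set-valued powerset, since $\beta$ restricted along $\BADL$ is the iso $L'\BADL \cong \BADL L$ and $\Pow\ForgPos X$ is (the $\DL$-image under $W$ of) a Boolean algebra; indeed $\beta_{\Pow\ForgPos X}$ is an iso because $\Pow\ForgPos X$ is in the image of $\ForgBA$. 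Thus the middle map is, up to iso, just $\ForgBADL$ applied to $\delta_{\ForgPos X}$, which is injective by hypothesis; likewise the right map is $\ForgBADL\delta_{\ForgPos X^\two}$, injective. Since $\ForgBADL:\BA\to\DL$ is (on underlying sets) just the inclusion-of-carrier functor, injectivity of these maps is injectivity of the underlying functions.

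Finally I would invoke the description of inserters in $\DL$ (and more generally in ordered varieties): the inserter $E\hookrightarrow B \rightrightarrows A$ of $f,g$ is the sub-ordered-algebra $\{b\in B\mid fb\le gb\}$ of $B$, with $E\to B$ injective. Now consider the cone $(L'\Up X \to \ForgBADL L\Pow\ForgPos X,\ L'\Up X\to \ForgBADL L\Pow\ForgPos X^\two)$ given by the top row, compose with the (injective) vertical maps to land in the bottom inserter diagram, and note the composite factors through the bottom inserter $\Up T'X$ by commutativity of the two squares. On underlying sets, the top-left leg $L'\Up X \to \ForgBADL L\Pow\ForgPos X$ is injective (an inserter leg), and post-composing with the injective middle vertical gives an injection; but that composite equals $(\text{injective leg } \Up T'X\to\ForgBADL\Pow T\ForgPos X)\circ \delta'_X$. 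An injective map factoring through $\delta'_X$ forces $\delta'_X$ to be injective. The "in particular" clause is then immediate from the cited Theorem \cite{2004:KKP}, applied now in the positive setting: component-wise injectivity of $\delta'$ gives weak completeness of $(L',\delta')$.

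The main obstacle I expect is bookkeeping around $\beta$: one must be careful that $\beta_{\Pow\ForgPos X}$ really is invertible (so that the middle vertical genuinely reduces to $\ForgBADL\delta_{\ForgPos X}$ and inherits its injectivity), which relies on $\Pow\ForgPos X$ lying in the image of $\ForgBADL$ together with the defining iso $\beta\ForgBADL\colon L'\ForgBADL\cong\ForgBADL L$ — and on the compatibility $\ForgBADL\Pow=\Up\Disc$ being strict enough for the diagram to typecheck. A secondary point requiring care is that strong finitarity of $L$ is what guarantees $L'$ exists and is finitary (Prop.\ref{prop:finitarypositivication}) and that the inserter presentation \eqref{eq:PositivicationIns} is available; everything else is a diagram chase plus the elementary fact that inserter legs in ordered algebras are monomorphisms on carriers.
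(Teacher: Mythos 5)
Your proposal is correct and follows essentially the same route as the paper's proof: both rely on diagram \eqref{diag:deltaprime}, use that the vertical legs are injective because $\beta$ is an isomorphism and $\delta$ is componentwise injective, observe that the top row is an inserter (so its leg into $\ForgBADL L\Pow\ForgPos X$ is injective), and conclude injectivity of $\delta'_X$ from the factorisation. The extra detail you supply on inserters in ordered varieties being carried by subobjects, and the final ``injection factoring through $\delta'_X$'' step, merely makes explicit what the paper leaves implicit.
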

\begin{proof}
Recall first that finitary (and thus strongly finitary) functors $L:\BA\to\BA$ preserve injective maps (\cite[Lemma 6.14]{2010:AlexanderDaniela}). Since the natural transformation $\beta: L'\ForgBADL \to\ForgBADL L$ is an isomorphism, it follows that the vertical legs of (\ref{diag:deltaprime}) are injective. Since $\Up=(-,\two)$ turns coinserters into inserters and $L'$ preserves inserters by construction, the top row of (\ref{diag:deltaprime}) is an inserter as well, and hence injective. It follows that $\delta'_X$ must be injective.
\end{proof}

\noindent \textbf{The case of normal modal logic.} Let $L:\BA\to\BA$ be the syntax-building functor for normal modal logic:
\[
LA=(\FreeBA\ForgBA A)/ \{\dia(a\vee b)=\dia a\vee \dia b, \dia\bot=\bot\}
\]
and let $\delta: L\Pow\to\Pow\cPow$ be the semantic transformation for normal modal logic, i.e.
\[
\delta_X(\dia U)=\{V\subseteq X\mid V\cap U\neq\emptyset\}
\]
We have computed the posetification $\Pow'$ or $\Pow$ in Section \ref{sec:Posetification} and shown that the positivication $L'$ of $L$ is given by Dunn's syntax (\cite{1995:DunnPML}) in Section \ref{sec:Positivication}. The following result is well-known, and can be shown directly.
\begin{theorem}\label{thm:deltaMLmono}
The natural transformation $\delta: L\Pow\to\Pow\cPow$ is component-wise injective.
\end{theorem}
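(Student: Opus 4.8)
The plan is to unwind the definitions and reduce injectivity of $\delta_X$ to a concrete statement about diamonds of predicates, which is then verified by an elementary argument using the boolean structure of $\Pow X$. First I would recall that $LA=(\FreeBA\ForgBA A)/\{\dia(a\vee b)=\dia a\vee\dia b,\ \dia\bot=\bot\}$, so that a typical element of $L\Pow X$ is a finite boolean combination of terms $\dia U$ with $U\subseteq X$; using the normality equations, every such element can be brought into the disjunctive-conjunctive normal form $\bigvee_{i\in I}\left(\bigwedge_{j\in J_i}\dia U_{ij}\wedge\bigwedge_{k\in K_i}\neg\dia V_{ik}\right)$ with all $U_{ij},V_{ik}\subseteq X$. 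The map $\delta_X$ sends $\dia U$ to $\{V\subseteq X\mid V\cap U\ne\emptyset\}$ and is a $\BA$-morphism, so it suffices to show that $\delta_X$ separates any two distinct elements in normal form, equivalently that $\delta_X$ is injective on the free boolean algebra generated by the $\dia U$ modulo the normality equations.

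The key step is to exhibit, for each formal element $t\in L\Pow X$ that is not identically $\bot$ in $L\Pow X$, a point $W\in\cPow X$ witnessing $W\in\delta_X(t)$; dually, if $t\ne t'$ in $L\Pow X$ then $\delta_X(t)\ne\delta_X(t')$. Since $\delta_X$ is a boolean morphism this reduces to: if $\delta_X(t)=\cPow X$ (the top element) then $t$ is already the top of $L\Pow X$. So fix $t$ in normal form as above with $\delta_X(t)=\cPow X$. For a single conjunct $c_i=\bigwedge_{j\in J_i}\dia U_{ij}\wedge\bigwedge_{k\in K_i}\neg\dia V_{ik}$, note $W\in\delta_X(\dia U)$ iff $W\cap U\ne\emptyset$ and $W\in\delta_X(\neg\dia V)$ iff $W\cap V=\emptyset$; thus $\delta_X(c_i)$ consists of those $W$ meeting every $U_{ij}$ and missing every $V_{ik}$. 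The plan is then to observe that $\delta_X(t)=\bigcup_i\delta_X(c_i)$ equals all of $\cPow X$ forces, by choosing $W$ to be a carefully built subset (e.g.\ a transversal picking one element from each $U_{ij}$ for a conjunct whose $U_{ij}$ are pairwise disjoint from and disjoint from $\bigcup_k V_{ik}$, or more robustly $W=\emptyset$ to knock out conjuncts containing any $\dia U$ with $U\ne\emptyset$, and $W=X$ to knock out conjuncts containing any $\neg\dia V$ with $V\ne\emptyset$), that each non-redundant conjunct $c_i$ must be of the trivial form whose diamonds collapse, whence $t$ equals $\top$ in $L\Pow X$ already. Running this separation argument for the difference $t\leftrightarrow t'$ (symmetric difference in the boolean algebra) gives $\delta_X(t)=\delta_X(t')\Rightarrow t=t'$.

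The main obstacle will be the bookkeeping in the normal-form reduction: one must be careful that the quotient by the normality equations is exactly matched on the semantic side, i.e.\ that $\delta_X$ is well-defined on $L\Pow X$ (which holds because $\delta_X(\dia(U\cup V))=\delta_X(\dia U)\cup\delta_X(\dia V)$ and $\delta_X(\dia\emptyset)=\emptyset$, the empty set being the bottom of $\cPow\Pow X$), and that no further identifications are created by $\delta_X$ beyond those already imposed. Concretely, the crux is to show that for the free algebra on $\{\dia U\mid U\subseteq X\}$ modulo normality, the assignment $U\mapsto\{W\mid W\cap U\ne\emptyset\}$ is injective on the generators in the quotient, which is immediate since $U\ne U'$ gives, say, $x\in U\setminus U'$ and then $\{x\}\in\delta_X(\dia U)\setminus\delta_X(\dia U')$; and more generally that distinct normal forms differ on some witnessing $W$, which is the finite combinatorial argument sketched above. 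Everything else is routine verification that $\delta_X$ is a natural $\BA$-morphism, which is standard for normal modal logic.
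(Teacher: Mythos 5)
The paper offers no proof of this theorem (it is dismissed as ``well-known''), so the only question is whether your plan would actually go through. Your overall strategy --- reduce to a kernel condition for the $\BA$-morphism $\delta_X$, put elements of $L\Pow X$ in disjunctive normal form over the generators $\dia U$, and produce witness sets $W\in\cPow X$ --- is exactly the standard one-step-completeness argument, and all the right ingredients (the transversal construction, the observation that normality must exactly match the semantic identifications) appear in your sketch. However, one step as written is false: from $\delta_X\bigl(\bigvee_i c_i\bigr)=\top$ you cannot conclude that ``each non-redundant conjunct $c_i$ must be of the trivial form''; already $t=\dia U\vee\neg\dia U$ has $\delta_X(t)=\top$ with neither disjunct trivial. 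Covering $\cPow X$ by a union of the sets $\delta_X(c_i)$ gives you no handle on the individual $c_i$, so the dual ($\top$-based) formulation does not interact well with DNF.

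The fix is to run the argument in the other dual: injectivity of a $\BA$-morphism is equivalent to $\delta_X(t)=\bot\Rightarrow t=\bot$, and $\emptyset$ is now the \emph{intersection}-friendly side. Take a single conjunct $c=\bigwedge_{j\in J}\dia U_j\wedge\bigwedge_{k\in K}\neg\dia V_k$ and set $V=\bigcup_k V_k$. If every $U_j\setminus V$ is nonempty, choose $w_j\in U_j\setminus V$ and put $W=\{w_j\mid j\in J\}$; then $W\in\delta_X(c)$, so $\delta_X(c)\neq\emptyset$. If some $U_j\subseteq V$, then normality gives $\dia U_j\le\dia V=\bigvee_k\dia V_k$, whence $c\le\bigvee_k\dia V_k\wedge\bigwedge_k\neg\dia V_k=\bot$ already in $L\Pow X$. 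Thus every conjunct with $\delta_X(c)=\emptyset$ is provably $\bot$, and since $\delta_X(t)=\emptyset$ forces $\delta_X(c_i)=\emptyset$ for each disjunct, $t=\bot$ follows. This is the only substantive correction; your remarks on well-definedness of $\delta_X$ on the quotient and on separating generators are fine but, as you note, the generator case alone is far from sufficient --- the crux really is the conjunct-level witness argument above.
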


\begin{corollary}
If $L'$ is the positivication of the syntax functor for normal modal logic, $\cPow'$ the posetification of the powerset functor, and $\delta': L'\Up\to\Up\cPow'$ the semantics generated from $\delta: L\Pow\to\Pow \cPow$ by diagram (\ref{diag:deltaprime}), then the PCL $(L',\delta')$ is weakly complete for $\cPow'$-coalgebras.
\end{corollary}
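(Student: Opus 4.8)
The plan is simply to chain together the three preceding results. First I would observe that the syntax-building functor $L$ for normal modal logic is strongly finitary: by definition it is presented by the single unary operation $\dia$ subject to the equations $\dia(a\vee b)=\dia a\vee\dia b$ and $\dia\bot=\bot$, so it has a finitary presentation by operations and equations, and hence (by the cited Kurz--Rosický theorem) preserves sifted colimits. Thus the hypotheses of Theorem~\ref{thm:weakCompTransfer} on $L$ are met.

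Next, Theorem~\ref{thm:deltaMLmono} gives that $\delta\colon L\Pow\to\Pow\cPow$ is component-wise injective. Combining this with the previous step, Theorem~\ref{thm:weakCompTransfer} applies verbatim and yields that the induced $\delta'\colon L'\Up\to\Up\cPow'$ is again component-wise injective, where $L'$ is the positivication of $L$, $\cPow'$ the posetification of $\cPow$, and $\delta'$ the natural transformation built from $\delta$ via diagram~(\ref{diag:deltaprime}). Here I should note that $\cPow'$ is indeed the posetification of $\cPow$ and $L'$ indeed Dunn's positive modal logic syntax, as recalled in Sections~\ref{sec:Posetification} and~\ref{sec:Positivication}, so the $(L',\delta')$ in the statement is exactly the one produced by the general construction.

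Finally, applying the transferred weak completeness criterion (the cited theorem of \cite{2004:KKP}, in its PCL form) to the component-wise injective $\delta'$ gives that the PCL $(L',\delta')$ is weakly complete for $\cPow'$-coalgebras, which is the claim. I do not anticipate any genuine obstacle: the only point requiring a word of justification is that the normal-modal-logic functor $L$ is strongly finitary, and that is immediate from its equational presentation; everything else is a direct invocation of Theorems~\ref{thm:deltaMLmono} and~\ref{thm:weakCompTransfer}.
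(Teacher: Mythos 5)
Your proposal is correct and is exactly the argument the paper intends: the corollary is stated without proof precisely because it follows by chaining Theorem~\ref{thm:deltaMLmono} (injectivity of $\delta$) with Theorem~\ref{thm:weakCompTransfer} (transfer of injectivity to $\delta'$, whose conclusion already includes weak completeness via the cited criterion of \cite{2004:KKP}). Your added remark that $L$ is strongly finitary because of its equational presentation is the right justification for the hypothesis of Theorem~\ref{thm:weakCompTransfer}.
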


\section{Conclusion and future work}
We have presented positive coalgebraic logic at the same level of generality as boolean coalgebraic logic, and developed a method by which boolean coalgebraic logics can systematically be turned into positive coalgebraic logics. We have also shown that completeness follows automatically from the boolean case in this setup. More broadly,
we have also presented a practical application of enriched category theory in logic by showing that positive modal logic amounts to a type of $\Pos$-enriched coalgebraic logic. We believe that this perspective offers a deep insight into the fundamental difference between boolean and positive logics.

Much remains to be investigated. First, we do not yet have much practical experience and tools to compute positivications. As Section \ref{sec:Positivication} illustrates, our calculations are all indirect. In particular we would like to compute the positivication of graded modal logic. On the logic side, we have good reasons to believe that strong completeness transfers from the boolean to the positive case for a large class of functors. On the other hand, we believe that expressivity does not transfer in general. All this will be investigated in a future companion publication to this work. 


\bibliography{PCL}

\end{document}